\newtheorem{theorem}{Theorem}
\newtheorem{lemma}{Lemma}
\newtheorem{remark}{Remark}
\theoremstyle{nonumberplain}
\newtheorem{proof}{Proof}
\begin{document}
\title{Multi-snapshot Newtonized Orthogonal Matching Pursuit for Line Spectrum Estimation with Multiple Measurement Vectors}

\author{Jiang~Zhu Lin Han, Rick S. Blum and Zhiwei Xu \thanks{Jiang Zhu, Lin Han and Zhiwei Xu are with the Key Laboratory
of Ocean Observation-imaging Testbed of Zhejiang Province, Ocean College,
Zhejiang University, No.1 Zheda Road, Zhoushan, 316021, China. Rick S. Blum is with Electrical and Computer Engineering, Lehigh University, USA.}}
\maketitle
\begin{abstract}
In this paper, multi-snapshot Newtonized orthogonal matching pursuit (MNOMP) algorithm is proposed to deal with the line spectrum estimation with multiple measurement vectors (MMVs). MNOMP has the low computation complexity and state-of-the-art performance advantage of NOMP, and also includes two key steps: Detecting a new sinusoid on an oversampled discrete Fourier transform (DFT) grid and refining the parameters of already detected sinusoids to avoid the problem of basis mismatch. We provide a stopping criterion based on the overestimating probability of the model order. In addition, the convergence of the proposed algorithm is also proved. Finally, numerical results are conducted to show that the performance of MNOMP benefits from MMVs, and the effectiveness of MNOMP when compared against the state-of-the-art algorithms in terms of frequency estimation accuracy and computation complexity.
\end{abstract}
{\bf keywords}: Orthogonal matching pursuit, frequency estimation, line spectrum, Newton refinement, multiple measurement vectors
\section{Introduction}
\label{sec:intro}
One of the classical problems in digital communication and radar processing applications is to estimate continuous-valued frequencies of sinusoids in additive white Gaussian noise (AWGN) environments from a small number of measurements \cite{poor1, Kay1}. On the one hand, several classical subspace methods have been proposed to perform the frequency estimation, such as MUSIC and ESPRIT \cite{Schmidt, Roy}, which exploit the autocorrelation matrix's low-rank structure to estimate the underlying signal subspace. As the signal-to-noise ratio (SNR) decreases, the performance of these two algorithms for estimating closely spaced frequencies will degrade \cite{Teutsch}. On the other hand, a variety of methods based on sparse representation and compressed sensing (CS) have also been proposed to estimate frequencies for multiple sinusoids \cite{Malioutov, Hyder}. Basically, the estimation problem can be transformed to that of seeking a sparse approximation of the multiple sinusoids by referring to an infinite-dimensional dictionary. In fact, of all the frequencies lying on the discrete Fourier transform (DFT) grid, it can be shown that the signal can exactly be recovered by utilising convex optimization from randomly selected samples with high probability \cite{Candes1}.

However, there exists a major grid mismatch problem induced by the fact that the measurements are sparsely represented under a finite discrete dictionary, which badly deteriorates the performance of various reconstruction algorithms. In fact, one has to make a reasonable tradeoff between the oversampling rate and the computational cost when implementing sparse methods. This unavoidable grid mismatch problem is studied in \cite{Chi1, Scharf1} in detail. Moreover, sparse reconstruction methods usually entail one or more parameters which in fact are not necessarily known, such as the number of the sinusoids, the regularization parameters, the variance of the noise and so on. Recently, the semiparametric iterative covariance-based estimation (SPICE) algorithm \cite{Stoica1, Stoica2, Stoica3} has been proposed to alleviate the drawbacks of the discretization operation to a great extent, which uses the covariance fitting criterion from a statistical perspective and no user-parameters are required.


\subsection{Related work}
Recent works have shown that performance can be improved with multiple measurement vectors (MMVs) by harnessing group sparsity \cite{Atom1, Atom2, Atom3,Yang1, Chi, Fang}. In \cite{Yang1}, a sparse and parametric approach (SPA) for uniform and sparse linear arrays is proposed, which utilises MMVs to perform line spectrum estimation by solving a semidefinite programming problem. In \cite{Chi}, two approaches are developed to solve the problem of line spectrum denoising and estimation, which estimate an ensemble of spectrally-sparse signals composed of the same set of continuous-valued frequencies from MMVs, and demonstrate the benefit of including MMVs. The iterative reweighed approach (IRA) is proposed to deal with both single measurement vector (SMV) and MMVs \cite{Fang}, where all the frequencies are updated in parallel.

In \cite{Mamandipoor}, a fast sequential Newtonized orthogonal matching pursuit (NOMP) algorithm is proposed. It is shown that NOMP achieves high estimation accuracy for millimeter wave spatial channel estimation \cite{Madhow1, Madhow2, Madhow3, Jinshi}. Motivated by NOMP and its related applications \cite{Mamandipoor}, we develop the multi-snapshot NOMP (MNOMP) for line spectrum estimation with MMVs, and apply MNOMP for DOA estimation.
\subsection{Contributions}
The main contributions are summarized as follows:
\begin{itemize}
  \item We develop MNOMP algorithm to deal with line spectrum estimation with MMVs. Similar to NOMP \cite{Mamandipoor}, our proposed algorithm avoids the basis mismatch problem by using a Newton refinement step as feedback to improve the estimation of already detected sinusoids. A stopping criterion based on overestimating probability is provided and the model order $K$ is determined. In addition, probability of miss is also calculated. It is numerically shown that MNOMP benefits from MMVs.
  \item We analyze the convergence of MNOMP by using the theory of dual norms. Specifically, we provide an upper bound on the number of iterations required by MNOMP and also obtain a bound on the convergence rate.
  \item Numerical simulations are conducted to show that MNOMP benefits from MMVs. By comparing the estimation performance of various algorithms against the Cram\'er-Rao bound (CRB), MNOMP achieve a near-optimal performance in terms of estimation accuracy. In addition, MNOMP is applied to the DOA problems and its effectiveness is validated.
\end{itemize}

\emph{Outline:}
In Section $\rm {\uppercase\expandafter{\romannumeral2}}$, we set up the problem model. We propose MNOMP in Section $\rm {\uppercase\expandafter{\romannumeral3}}$. In Section $\rm {\uppercase\expandafter{\romannumeral4}}$, we present the stopping criterion based on the probability of overestimating the model order, along with an analytical expression of the miss probability of detecting the sinusoids. We present the  convergence analysis in Section $\rm {\uppercase\expandafter{\romannumeral5}}$. In Section $\rm {\uppercase\expandafter{\romannumeral6}}$, we conduct numerical experiments to compare the estimation accuracy of MNOMP against the state-of-the-art methods. Section $\rm {\uppercase\expandafter{\romannumeral7}}$ concludes the paper.

\emph{Notation:} Let $(\cdot)^{\rm H}$, $(\cdot)^{\rm T}$ and $(\cdot)^{\rm *}$ denote the conjugate transpose, transpose and conjugate operator respectively. $\mathcal {CN}$ denotes the complex Gaussian distribution. The Frobenius norm, the real (imaginary) part of the complex number $a$ and the trace operator are denoted by $\Vert {\cdot} \Vert_{\rm{F}}$, $\Re\{a\}$ ($\Im\{a\}$) and $\rm {tr}\{\cdot\}$ respectively. $\lfloor a \rfloor$ denotes the greatest integer that is less than or equal to $a$. p
\section{Problem Setup}
\label{sec:pro}
In an MMV model, we consider a line spectrum estimation scenario with $N$ sensors and $T$ snapshots {\footnote{{Snapshot is usually used in the array processing environment.}}} collecting measurements of $K$ distinct frequency components. The measurements at the array output can be expressed as \footnote{Extension to the compressed observation scenario ${\mathbf Y}={\boldsymbol \Phi}{\mathbf A}{\mathbf X}+{\mathbf Z}$ is straightforward and is omitted.}
\begin{align}\label{multisnap}
{\mathbf Y}={\mathbf A}{\mathbf X}+{\mathbf Z},
\end{align}
where the $k$th column of $\mathbf A$ is
\begin{align}\label{aw}
{\mathbf a}(\omega_k) \triangleq \frac{1}{\sqrt{N}}\left[1,e^{{\rm j}\omega_k},\cdots,e^{{\rm j}(N-1)\omega_k}\right]^{\rm T},
\end{align}
${\mathbf Y}\in{\mathbb {C}}^{N\times T}$ is the noisy measurement collected by all $N$ sensors and $T$ snapshots, and $\mathbf A=[\mathbf a(\omega_1), \cdots, \mathbf a(\omega_K)]\in {\mathbb {C}}^{N\times K}$. Each $\omega_l$ is continuous-valued in $[0,2\pi)$. ${z}_{ij}$ is independent and identically distributed (i.i.d.) Gaussian random variable and follows ${z}_{ij}\sim {\mathcal {CN}}({0},{\sigma}^2)$. ${\mathbf X}=[\mathbf x_1,\cdots, \mathbf x_T]\in{\mathbb {C}}^{K\times T}$ contains all the sinusoid amplitudes ${x}_{ij}$ for each snapshot. The uniform linear array (ULA) scenario for DOA can also be formulated as model (\ref{multisnap}) \cite{Yang1}.

\section{MNOMP Algorithm}
\label{sec:algorithm}
We first look into the estimation problem of a single sinusoid, and then generalize the results to a mixture of sinusoids. We borrow the idea in NOMP \cite{Mamandipoor} and develop MNOMP for the MMVs.
\subsection{Single frequency}
In this scenario, model (\ref{multisnap}) simplifies to \footnote{For the compressed scenario, the algorithm can be designed via solving ${\mathbf Y}={\boldsymbol \Phi}{\mathbf a}{\mathbf x}^{\rm T}+{\mathbf Z}$.}
\begin{align}\label{two}
{\mathbf Y}={\mathbf a}{\mathbf x}^{\rm T}+{\mathbf Z},
\end{align}
where $\mathbf x=\left[x_1,\cdots,x_T\right]^{\rm T}\in{\mathbb C}^{T\times 1}$, and ${\mathbf a}=\left[1,e^{{\rm j}\omega},\cdots,e^{{\rm j}(N-1)\omega}\right]^{\rm T}/\sqrt{N}$.

The Maximum Likelihood (ML) estimate of the amplitudes $\mathbf x$ and frequency $\omega$ can be calculated by minimizing the residual power $\Vert {\mathbf Y}-{\mathbf a}{\mathbf x}^{\rm T} \Vert^2_{\rm{F}}$, which is equal to maximizing the function
\begin{align}\label{objective}
S({\mathbf x},\omega)=\sum\limits_{t=1}^T2\Re\{{\mathbf y}_t^{\rm H}{x}_t{\mathbf a}\}-|x_t|^2\|{\mathbf a}\|^2\triangleq \sum\limits_{t=1}^TS(x_t,\omega),
\end{align}
which will lead to a simpler description of the algorithm. It's difficult to directly optimize $S({\mathbf x},w)$ over all amplitudes and frequency. As a result, a two stage procedure is adopted: (1) Detection stage, in which a coarse estimate of $\omega$ is found by restricting it to a discrete set, (2) Refinement stage, where we iteratively refine the estimates of amplitudes and frequency of detected sinusoids. Similar to \cite{Mamandipoor}, for any given $\omega$, the gain that maximizes $S(x_t,\omega)$ is $\hat{x}_t=\left({\mathbf a}^{\rm H}{\mathbf y}_t\right)/\|{\mathbf a}\|_2^2$. Substituting $\hat{\mathbf x}$ in $S({\mathbf x},w)$ allows us to obtain the generalized likelihood ratio test (GLRT) estimate of $\omega$, which is the solution to the following optimization problem:
\begin{align}\label{cost}
\hat{\omega} = \underset{\omega}{\rm {argmax}}~{\mathbf G}_{\mathbf Y}(\omega)\triangleq \underset{\omega}{\rm {argmax}}~\sum\limits_{t=1}^T{\mathbf G}_{{\mathbf y}_t}(\omega),
\end{align}
where \begin{align}\label{sub}
{\mathbf G}_{{\mathbf y}_t}(\omega)= |{\mathbf y}_t^{\rm H}{\mathbf a}|^2/\Vert {\mathbf a}\Vert^2_{\rm{2}}.
\end{align}
is the GLRT cost function for the $t$th SMV. We use this to obtain a coarse estimate of $({\mathbf x},\omega)$ in the detection stage.


{\emph{Detection:}} By restricting $\omega$ to a finite discrete set denoted by $\Omega \triangleq \{k(2\pi/{\gamma}N):k=0, 1, \cdots, ({\gamma}N-1)\}$, where $\gamma$  is the oversampling factor relative to the DFT grid, we can obtain a coarse estimate of $\omega$. We treat the ${\omega}_c\in{\Omega}$ that maximizes the cost function (\ref{sub}) as the output of this stage, and the corresponding $\mathbf x$ vector estimate is ${\mathbf a}^{\rm H}(\omega_c){\mathbf y}_t/\Vert {{\mathbf a}(\omega_c)} \Vert^2_{\rm{2}}$.

{\emph{Refinement:}} Let $(\hat{\mathbf x}, \hat{\omega})$ denote the current estimate, then the Newton procedure for frequency refinement is
\begin{align}\label{rule}
\hat{\omega}' = \hat{\omega} - \dot{\mathbf G}_{\mathbf Y}(\hat{\omega})/\ddot{\mathbf G}_{\mathbf Y}(\hat{\omega}),
\end{align}
where $\dot{\mathbf G}_{\mathbf Y}(\hat{\omega})=\sum\limits_{t=1}^T\dot{\mathbf G}_{{\mathbf y}_t}(\hat{\omega})$ and $\ddot{\mathbf G}_{\mathbf Y}(\hat{\omega})=\sum\limits_{t=1}^T\ddot{\mathbf G}_{{\mathbf y}_t}(\hat{\omega})$ are simply summing over the $T$ corresponding the first and second order of the SMV terms, $\dot{\mathbf G}_{{\mathbf y}_t}(\hat{\omega})$ and $\ddot{\mathbf G}_{{\mathbf y}_t}(\hat{\omega})$ are given by \cite{Mamandipoor}
\begin{align}\label{des1}
\dot{\mathbf G}_{{\mathbf y}_t}(\hat{\omega}) = \Re\left\{\left({\mathbf y}_t - x_t{\mathbf a}(\hat \omega) \right)^{\rm H} x_t \left(d{\mathbf a}(\hat \omega)/d{\hat \omega} \right)\right\},
\end{align}
\begin{align}\label{des2}
\ddot{\mathbf G}_{{\mathbf y}_t}(\hat{\omega}) = \Re\left\{\left({\mathbf y}_t - x_t{\mathbf a}(\hat \omega) \right)^{\rm H} x_t \left(d^2{\mathbf a}(\hat \omega)/d{\hat \omega}^2 \right)\right\} - |x_t|^2\Vert d{\mathbf a}(\hat \omega)/d{\hat \omega}\Vert^2.
\end{align}
We maximize ${\mathbf G}_{\mathbf Y}({\omega})$ by employing the update rule (\ref{rule}) on the condition that the function is locally concave.
\subsection{Multiple frequency}
Assume that we have already detected {$L$} sinusoids, and let $P = \{({\mathbf x}_l,w_l), l=1,\cdots,L\}$ denote the set of estimates of the detected sinusoids. The residual measurement corresponding to this estimate is
\begin{align}
{\mathbf Y}_r(P)={\mathbf Y}-{\sum\limits_{l=1}^L {\mathbf a}(\omega_l){\mathbf x}_l^{\rm T}}
\end{align}
The method of estimating multiple frequencies proceeds by employing the single sinusoid procedure to perform Newtonized coordinate descent on the residual energy $\Vert{{\mathbf Y}_r(P)}\Vert_{\rm F}^{\rm 2}$. One step of this coordinate descent involves adjusting all $\omega_l$. The procedure to refine the $l$th sinusoid is as follows: ${\mathbf Y}_r(P\backslash \{{\mathbf x}_l, \omega_l\})$ now is referred to as the measurement ${\mathbf Y}$ and the single frequency update step is utilised to refine $({\mathbf x}_l, \omega_l)$.

Refinement Acceptance Condition (RAC): This refinement step is accepted when it results in a strict improvement in $G_{{\mathbf Y}_r(P\backslash \{{\mathbf x}_l, \omega_l\})}(\omega)$, namely, $G_{{\mathbf Y}_r}(\hat{\omega}')>G_{{\mathbf Y}_r}(\hat{\omega})$. By doing this, we can make sure that the adopted refinement must decrease the overall residual energy.

In summary, firstly, we detect a frequency $\hat \omega$ over the discrete set $\Omega$ by maximizing the cost function (\ref{sub}). Then we use the knowledge of the first-order and second-order derivative of the cost function to refine the estimate of $\hat \omega$. Next, we use the information of all the other previously detected sinusoids to further improve the estimation performance of every previously detected sinusoid one at a time. This step is crucial for the convergence and accuracy of the algorithm. Finally, we update $\mathbf x$ by least squares methods. The whole MNOMP is summarized in Algorithm \ref{Alg1}.

Here, we explain some main elements in MNOMP (Algorithm \ref{Alg1}):
\begin{itemize}
  \item SINGLE REFINEMENT (Step 7:) The single refinement locally refines the results obtained by coarsely picking the maximum over the dictionary matrix, and the number of single refinement is $R_s$.
  \item CYCLIC REFINEMENT (Step 9:) Through this process, a feedback is provided for local refinements of previously detected sinusoids, which allows us to better understand the received signal with the addition of another sinusoid. And the number of the cyclic refinement is $R_c$.
  \item UPDATE by least squares (Step 10:) By projecting the received signal onto the subspace spanned by the estimated frequencies, we update amplitudes of signals to make sure that the residual energy is the minimum possible for the present set of estimated frequencies.
\end{itemize}

Here, we compare the computational complexity of MNOMP and SPA \cite{Yang1}. As shown in Algorithm \ref{Alg1}, MNOMP includes IDENTITY STEP, SINGLE REFINEMENT STEP and CYCLIC STEP \cite{Mamandipoor}. According to \cite{Mamandipoor}, assuming that the proposed algorithm has run for precisely $K$ iterations, namely, we stop the algorithm when the model order of estimated signal is the same as that of the true signal. First, determining whether the stopping criterion is satisfied involves fast Fourier transform (FFT), with complexity $O(KNT{\rm {log}}(N))$. Second, as for Step $5$, the IDENTITY step, the GLRT cost function is calculated over the dictionary matrix, which can be implemented by using FFTs in $O(\gamma K N T {\rm {log}}(\gamma N))$. Then the SINGLE REFINEMENT step requires only $O(R_sN)$ operations per sinusoid per snapshot, hence the total cost for this step is $O(R_s KNT)$. Furthermore, the CYCLIC REFINEMENT refining all frequencies that have been estimated has complexity $O(R_c R_s K^2 N T)$. For SPA, its computation complexity is $O(N^2T+N^3+N^{6.5})$ \cite{Yang1}, which is higher than MNOMP.

\begin{algorithm}[ht]
\caption{MNOMP.}\label{Alg1}
1: \textbf{Procedure} EXTRACTSPECTRUM $({\mathbf Y, \tau}):$\\
2: $m\leftarrow 0$, ${P}_0 = \{\}$\\
3: $\textbf{while}$ {${\max}_{\omega\in {\rm {DFT}}}G_{{\mathbf Y}_r({P}_m)}(\omega)>\tau$}\\
4: $m\leftarrow m+1$\\
5: $\textbf{IDENTIFY}$\\
$\hat{\omega} = {\rm {arg~max}}_{\omega \in {\Omega}} G_{{\mathbf Y}_r({P}_{m-1})}(\omega)$\\
 and its corresponding ${\mathbf x}$ vector estimate \\$\hat {\mathbf x}^{\rm T} \leftarrow \left({\mathbf a}^{\rm H}(\hat \omega){\mathbf Y}_r({P}_{m-1})\right)/\Vert {{\mathbf a}(\hat \omega)} \Vert^2_{\rm{2}}$.
\\6: ${P_m'}\leftarrow {P}_{m-1}\cup \{(\hat {\mathbf x}, {\hat \omega})\}$
\\7: SINGLE REFINEMENT: Refine $(\hat {\mathbf x}, \hat \omega)$ using single frequency Newton update algorithm (${R}_s$ Newton steps) to obtain improved estimates $(\hat {\mathbf x}', \hat {\omega}')$.
\\8: ${P_m''}\leftarrow {P}_{m-1}\cup \{(\hat {\mathbf x}', {\hat \omega}')\}$
\\9: CYCLIC REFINEMENT: Refine parameters in ${P}''_m$ one at a time: For each $(\mathbf x, \omega)\in{P_m''}$, we treat ${\mathbf Y}_r({P_m''} \backslash \{(\mathbf x, \omega)\})$ as the measurement $\mathbf Y$, and apply single frequency Newton update algorithm. We perform ${R}_c$ rounds of cyclic refinements. Let ${P_m'''}$ denote the new set of parameters.
\\10: UPDATE all ${\mathbf x}$ vector estimate in ${P_m''}$ by least squares: $\mathbf A \triangleq [{\mathbf a}({\omega_1}),\cdots,{\mathbf a}(\omega_m)]$, $\{\omega_l\}$ are the frequencies in ${P_m'''}$. And $[{\mathbf x}_1,\cdots, {\mathbf x}_m]^{\rm T} = {\mathbf A}^\dagger {\mathbf Y}$. \\Let ${P}_m$ denote the new set of parameters.\\
11: $\textbf{end while}$
\\12: \textbf{return} $P_m$
\label{code:recentEnd}
\end{algorithm}

\section{Stopping criterion}
To understand the performance of MNOMP, the probability of the algorithm overestimating the model order $K$ is of interest. An extreme scenario is that the proposed algorithm has detected $K$ sinusoids which causes the residual to be only AWGN in model (\ref{multisnap}), and the stopping criterion still isn't met. So the algorithm has to detect another sinusoid to make the residual decrease, which corresponds to the scenario of overestimating the model order.

We use the stopping criterion to estimate the model order $K$. If the residual energy can be well explained by noise, up to a target overestimating probability, then we stop. Intuitively, we choose to terminate the algorithm by comparing the magnitude of the Fourier transform of the residual with the expected noise power. Details are given in the next section.
\subsection{Stopping criterion}
The algorithm stops when
\begin{align}
G_{{\mathbf Y}_r(P)}(\omega) =  {\sum\limits_{t=1}^T} \left|\left \langle {\mathbf y}_{rt}(P), {\mathbf a}(\omega)\right \rangle \right|^2 < \tau
\end{align}
for all DFT sampling frequencies $\{\omega_n\triangleq 2\pi n/N: n = 0, \cdots, N-1$\}, where ${\mathbf y}_{rt}(P)$ is the $t$th column of $\mathbf Y_{r}(P)$, $T$ is the number of snapshots and $\tau$ is the stopping threshold.

Supposedly, we have already correctly detected all sinusoids in the mixture. Under this condition, the residual is ${\mathbf y}_{rt}(P) \approx {\mathbf z_t}$, where ${\mathbf  z_t} \sim \mathcal{CN}(\mathbf 0, \sigma^2{\mathbf I_M})$. Then by defining $R_n \triangleq \sum_{t=1}^T| {\mathbf a}^{\rm H} (\omega_n){\mathbf z}_t|^2$, we obtain
\begin{align}\label{13}
&{\rm {Pr}}\left\{\underset{n=1, \cdots, N}{\rm {max}}~\sum_{t=1}^T| {\mathbf a}^{\rm H} (\omega_n){\mathbf z}_t |^2>\tau\right\}= {\rm {Pr}}\left( \underset{n=1, \cdots, N}{\rm {max}}~R_n>\tau\right)\notag\\
  =& 1-{\rm {Pr}}\left( \underset{n=1, \cdots, N}{\rm {max}}~R_n\leq\tau\right) = 1-{\rm {Pr}}\left( R_1\leq\tau,\cdots, R_N\leq \tau\right)
\end{align}
Note that by defining $u_{n,t} = {\mathbf a}^{\rm H} (\omega_n){\mathbf z}_t$, we have
\begin{align}\label{u}
&{\rm E}\left[u_{n_1,t_1} u^*_{n_2,t_2}\right]
= {\rm E}\left[{\mathbf a}^{\rm H} (\omega_{n_1}){\mathbf z}_{t_1}{\mathbf z}_{t_2}^{\rm H} {\mathbf a} (\omega_{n_2})\right]\notag\\
=& \sigma^2 \delta_{t_1, t_2} {\rm E}\left[ {\mathbf a}^{\rm H} (\omega_{n_1}){\mathbf a} (\omega_{n_2})\right] = \sigma^2 \delta_{t_1, t_2} \delta_{n_1, n_2},
\end{align}
where $\delta_{t_1, t_2}$ denotes the Dirac delta function, which equals zero unless $t_1 = t_2$ holds. From (\ref{u}), we can conclude that $R_n=\sum\limits_{t=1}^T|u_{n,t}|^2$ is a $\chi^2$ random variable with $2T$ degrees of freedom and common variance $\sigma^2/2$. With the degrees of freedom $2T$ being even and common variance $\sigma_0^2$, the cumulative distribution function (CDF) $F_{\chi_{2T}^2}(x,\sigma_0^2)$ has a closed form \cite[Equation (2.3-24)]{proagis}
\begin{align}
F_{\chi_{2T}^2}(x,\sigma_0^2)=
\begin{cases}
 & 1-{\rm e}^{-\frac{x}{2\sigma_0^2}}\sum_{k=0}^{T-1}\frac{1}{k!}\left(\frac{x}{2\sigma_0^2}\right)^k,~\tau>0 \\
 & 0, \quad{\rm otherwise}.
\end{cases}
\end{align}
Since $F_{\chi_{2T}^2}(x,\sigma_0^2)$ depends only on $x/\sigma_0^2$, we define $F_{\chi_{2T}^2}(x/\sigma_0^2)\triangleq F_{\chi_{2T}^2}(x,\sigma_0^2)$ for compactness. For our problem, (\ref{13}) can be calculated as
{\begin{align}
{\rm {Pr}}\left\{\underset{n=1, \cdots, N}{\rm {max}}~\sum_{t=1}^T|{\mathbf a}^{\rm H} (\omega_n){\mathbf z}_t |^2>\tau\right\}
 = 1 - \left[{\rm {Pr}}\left(R_n\leq\tau\right)\right]^N
 =1 -F_{\chi_{2T}^2}^N\left(\frac{2\tau}{\sigma^2}\right).\notag
\end{align}}
Let ${\rm P}_{\rm {oe}}$ denote a nominal overestimating probability. Thus ${\rm P}_{\rm {oe}}$ satisfies
{\begin{align}\label{poe}
{\rm {Pr}}\left\{\underset{n=1, \cdots, N}{\rm {max}}~\sum_{t=1}^T|{\mathbf a}^{\rm H} (\omega_n){\mathbf z}_t |^2> \tau\right\} = {\rm P}_{\rm {oe}}.
\end{align}}
By defining ${F}^{-1}_{\chi_{2T}^2}(\cdot)$ as the inverse function of $F_{\chi_{2T}^2}(\cdot)$, we obtain
\begin{align}\label{false}
\tau = {F}^{-1}_{\chi_{2T}^2}\left((1-{\rm P}_{\rm {oe}})^\frac{1}{N}\right)\sigma^2/2.
\end{align}
Note that for a single snapshot, i.e., $T=1$, ${F}_{\chi_{2}^2}(2{\tau}/{\sigma^2})=1-{\rm e}^{-{\tau}/{\sigma^2}}$ and $\tau=-\sigma^2{\rm {log}}\left(1-(1-P_{\rm {oe}})^{1/N} \right)$ from (\ref{false}), which is consistent with the results obtained in \cite{Mamandipoor}.

We conduct a numerical experiment by comparing the ``measured'' against ``nominal'' overestimating probability (\ref{poe}) to substantiate the above analysis. We use MNOMP to estimate frequencies in a mixture of $K = 16$ sinusoids of the same fixed nominal SNR, defined as
\begin{align}\label{snr}
{\rm SNR}_k = 10{\rm {log}_{10}}\left(\frac{\Vert{\mathbf a}(\omega_k){\mathbf x}_k^{\rm T}\Vert_{\rm F}^2}{\sigma^2T}\right)=10{\rm {log}_{10}}\left(\frac{\Vert{\mathbf x}_k\Vert_{\rm 2}^2}{\sigma^2T}\right), ~k = 1, \cdots, K.
\end{align}
Furthermore, we generate the frequencies such that the minimal wrap-around frequency separation is $\Delta \omega_{\rm {min}}$, where $\Delta \omega_{\rm {min}} = 2.5\times \Delta \omega_{\rm {DFT}}$ and $\Delta \omega_{\rm {DFT}}\triangleq 2\pi/N$ is the DFT grid separation. The parameters are set as follows: $N=256$, $K = 16$, $T=10$, $R_s = 1$, $R_c = 3$, the number of Monte Carlo (MC) trials is ${\rm MC}=300$. The ``measured'' overestimating probability is defined as the ratio of the overestimating events in all MC trials. Fig. \ref{Fig1} shows that the empirical overestimating probability is close to the nominal value at various SNRs, which means that the final residual error can be approximated as the AWGN, namely, the frequency estimation accuracy of MNOMP is good.
\begin{figure}[htbp]
 \centering
  \includegraphics[width=80mm]{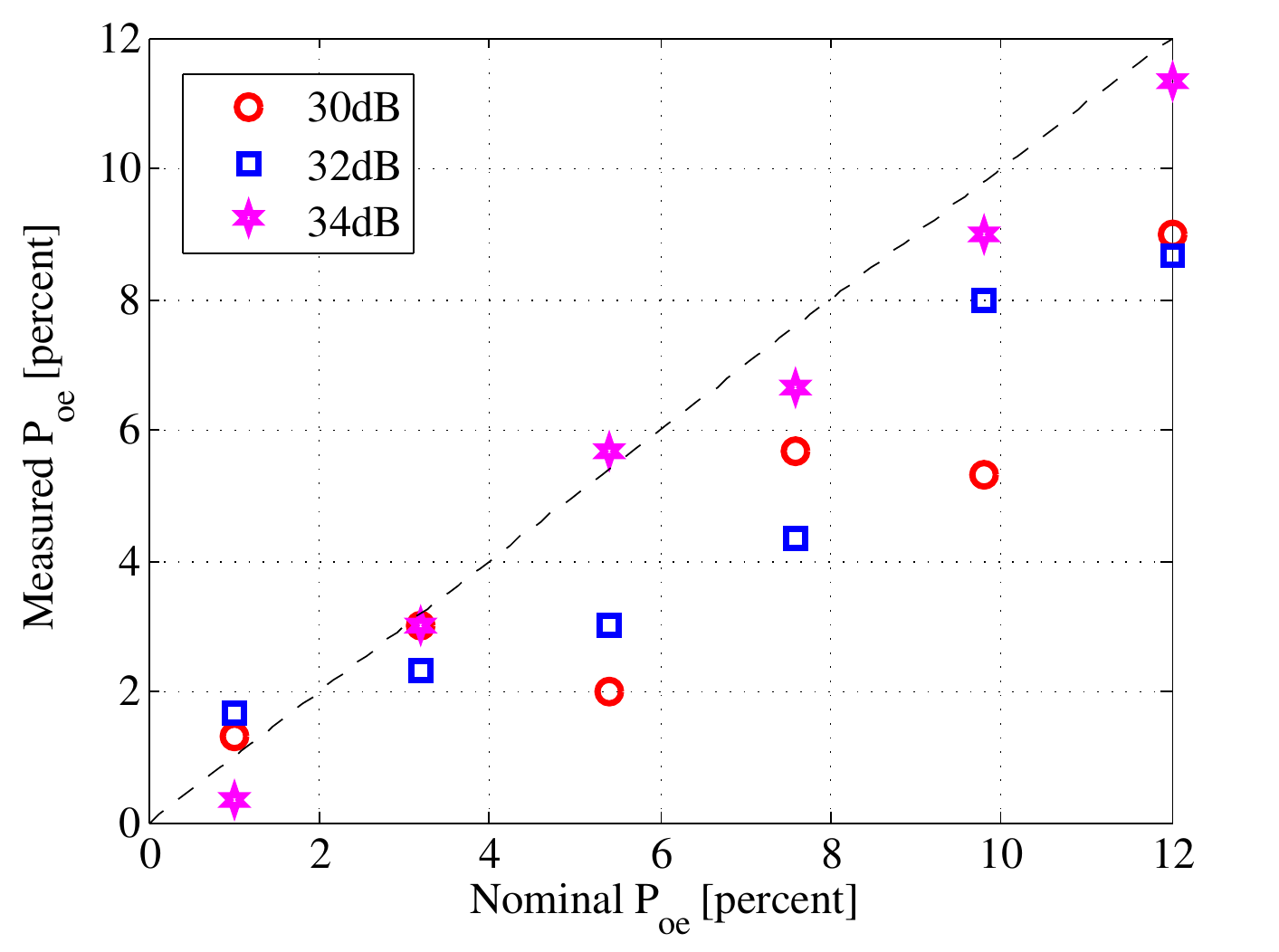}\\
  \caption{Nominal vs measured probability of overestimating probability.}\label{Fig1}
\end{figure}
\subsection{Probability of Miss}
Firstly, we define the neighborhood $N_{\omega_k}$ around the true frequency $\omega_k$ as $N_{\omega_k} \triangleq \{\omega:{\rm {dist}}(\omega, \omega_k)\leq 0.25\times \Delta\omega_{\rm {DFT}}\}$. Then we declare a miss of $\omega_k$ if none of the estimated frequencies locates in $N_{\omega_k}$, otherwise we declare a successful detection of $\omega_k$. Note that a miss is caused by both noise and inter-sinusoid interference, but we only discuss noise here.

Assuming no inter-sinusoid interference, the measurement $\mathbf Y$ can be described as ${\mathbf Y}=[{\mathbf a}(\omega)x_1 + {\mathbf z}_1, {\mathbf a}(\omega)x_2 +{\mathbf z}_2, \cdots, {\mathbf a}(\omega) x_T+{\mathbf z}_T]$. The signal frequency $\omega$ is not detected if
\begin{align}
{\rm P}_{\rm {miss}} = {\rm {Pr}}\left\{\Vert {{\mathbf a}^{\rm H}({\omega}_d){\mathbf Y}} \Vert^2_2< \tau\right\},
\end{align}
where $\omega_d$ denotes the sampling frequency. Note that ${x_i}$ is deterministic known, and ${\mathbf z}_i\sim{\mathcal {CN}}(0, \sigma^2{\mathbf I}_N)$. Hence, we have
\begin{align}
{\rm P}_{\rm {miss}} = {\rm {Pr}}\left\{  \sum_{t=1}^T\left| {{\mathbf a}^{\rm H}({\omega}_d)\left[{\mathbf a}(\omega)x_t+{\mathbf z}_t\right]} \right|^2< \tau\right\}.
\end{align}
To calculate the above probability, we introduce
\begin{align}
&{v_t}\triangleq {{\mathbf a}^{\rm H}({\omega}_d)\left[{\mathbf a}(\omega)x_t+{\mathbf z}_t\right]}
=x_t e^{{\rm j}\frac{(N-1)(\omega-\omega_d)}{2}}\frac{{\rm {sin}}\left(\frac{N(\omega-\omega_d)}{2}\right)}{N{\rm {sin}}(\frac{\omega-\omega_d}{2})}+{\mathbf a}^{\rm H}(\omega_d){\mathbf z_t}.
\end{align}
By defining
\begin{align}
\alpha \triangleq \frac{   {\rm {sin}}\left[ N(\omega - \omega_d)/2 \right]    }{N {\rm {sin}}\left[ (\omega - \omega_d)/2\right]},\quad{\tilde R} \triangleq \Vert {{\mathbf a}^{\rm H}({\omega}_d){\mathbf Y}} \Vert^2_2=\sum_{t=1}^T|{v_t}|^2,
\end{align}
we can conclude that $\tilde{R}$ is a noncentral $\chi^2$ random variable with $2T$ degrees of freedom and common variance being $\sigma^2/2$. With degrees of freedom $2T$ being an even number, the CDF of $\tilde{R}$ can be written in the form
\begin{align}
{F}_{2T}(\tau)=\begin{cases}
 & 1-{\rm Q}_T\left( \frac{\sqrt{2}s}{\sigma},~\frac{\sqrt{2\tau}}{\sigma}\right),~\tau>0 \\
 & 0,
\end{cases}
\end{align}
where ${\rm Q}_T$ denotes Marcum ${\rm Q}$-function and {the noncentral parameter} $s$ is defined as
\begin{align}
s \triangleq \alpha\sqrt{\sum_{t=1}^{T}|x_t|^2}.
\end{align}
Thus it's easy to show that
\begin{align}
{\rm P}_{\rm {miss}} = 1 - {\rm Q}_T\left(\alpha{\sqrt{2\sum_{t=1}^T|{x_t|^2}}}/{\sigma}, \sqrt{2\tau/\sigma^2}\right).
\end{align}

Supposing a frequency within a DFT grid interval follows the uniform distribution, then we obtain ${\rm E}[\alpha] = 0.88$, where $\omega\sim {\rm {U}}[-\pi/N, \pi/N]$ \cite{Mamandipoor}. Hence,
\begin{align}
{\rm P}_{\rm {miss}} = 1-Q_T\left(0.88{\sqrt{2\sum_{t=1}^T|{x_t|^2}/{\sigma^2}}}, \sqrt{2\tau/\sigma^2}\right).
\end{align}
According to the definition of SNR (\ref{snr}), we have
\begin{align}\label{mis}
{\rm P}_{\rm {miss}} = 1-Q_T\left(0.88{\sqrt{2T{\rm {SNR}}}}, \sqrt{2\tau/\sigma^2}\right).
\end{align}

Here the probability of miss is linked to the probability of overestimating via the threshold $\tau$. It is meaningful to analyze the effects of snapshots $T$. For the generalized Marcum Q-function $Q_v(a, b)$, it is shown that it is strictly increasing in $v$ and $a$ for all $a\geq 0$ and $b, v > 0$, and is strictly decreasing in $b$ for all $a,b\geq 0$ and $v>0$ \cite{Sun}. In our case, $v = T$, $a = 0.88{\sqrt{2T{\rm {SNR}}}}$, $b = \sqrt{2\tau/\sigma^2}$. Let $T_1 > T_2$. Fixing the SNR and ${\rm P}_{\rm oe}$, according to equation (\ref{false}) the threshold $\tau$ depends on $T$ and $\tau_1 > \tau_2$. Obviously we have $v_1 > v_2$, $a_1>a_2$ and $b_1>b_2$. Thus it is difficult to obtain whether ${\rm P}_{\rm miss}|_{T=T_1}>{\rm P}_{\rm miss}|_{T=T_2}$ or not.

A simple simulation is conducted to plot ${\rm P}_{\rm miss}$ versus ${\rm P}_{\rm oe}$ under different snapshots. The parameters are the same as Fig. \ref{Fig1} except that $K=8$, ${\rm MC} = 500$ and ${\rm SNR} = 11$ dB. The results are shown in Fig. \ref{roc}. Note that the computed result does not approximate well with the measured results under $T=1$ and the other parameter settings. As snapshots increases, the computed results become more accurate. With the probability of overestimating ${\rm P}_{\rm oe}$ being fixed, the probability of miss ${\rm P}_{\rm miss}$ decreases as snapshots increases. This demonstrates that MNOMP benefits from MMVs.
\begin{figure}[htbp]
 \centering
  \includegraphics[width=80mm]{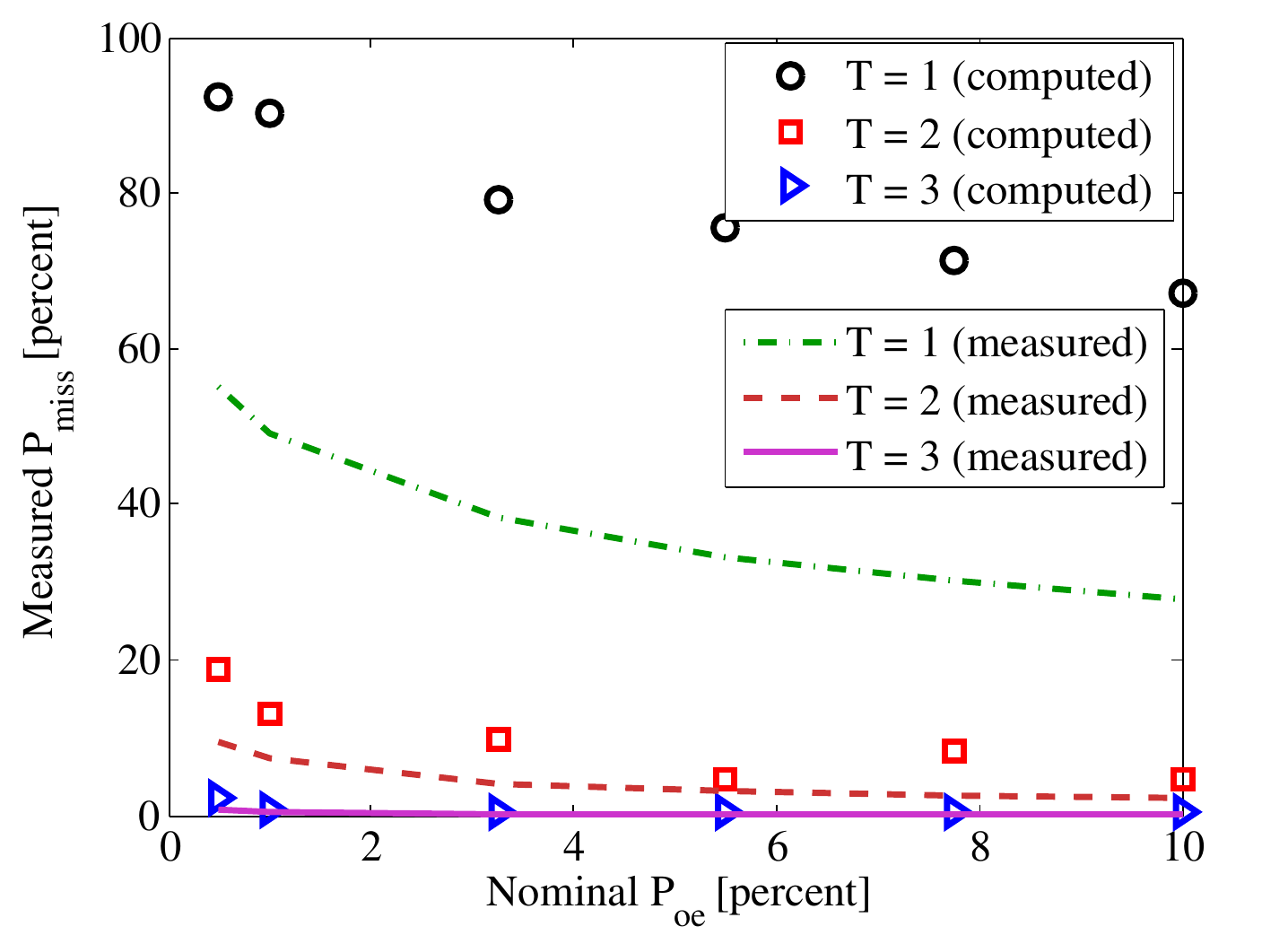}\\
  \caption{Probability of miss $P_{\rm {miss}}$ versus probability of overestimating $P_{\rm {oe}}$.p}\label{roc}
\end{figure}
\section{Convergence}
In this section, the convergence of the proposed algorithm is studied. Firstly, upper bounds on the number of iterations needed to reach the stopping condition are given. Then a bound on the rate of convergence of MNOMP is provided, which is a function of the ``atomic norm'' of original measurements $\mathbf Y$ and the oversampling factor $\gamma$.
\subsection{Proof of convergence}
It's easy to show that the number of measurements $N$ is a trivial upper bound of the number of iterations of MNOMP. From Update step $10$ of MNOMP, it can be shown that $\mathbf X$ becomes a square full-rank matrix (note that there is no frequency that will be detected twice) after $N$ iterations. Then the algorithm terminates because the residual is equal to zero for the $(N +1)$th iteration.

In the following, we provide another upper bound on the number of iterations, which is obtained by considering how much the residual energy will decrease when a new frequency is added to the set of estimated sinusoids.
\begin{theorem}
Let $\Vert {\mathbf Y} \Vert_{\rm F}^2$ be the original residual energy. When a new frequency is added to the set of estimated sinusoids, the reduction of residual energy is at least $\tau$. Consequently, ${\rm {min}}\left\{  N, \lfloor\Vert {\mathbf Y} \Vert_{\rm F}^2/\tau\rfloor\right\}$ is an upper bound on the number of iterations of MNOMP.
\end{theorem}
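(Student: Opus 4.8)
The plan is to track the residual energy across iterations. Write $E(P)\triangleq\|{\mathbf Y}_r(P)\|_{\rm F}^2$ and let $E_m\triangleq E(P_m)$ be the residual energy after iteration $m$, so $E_0=\|{\mathbf Y}\|_{\rm F}^2$ and $E_m\ge 0$ for every $m$. If I can show that each executed iteration satisfies $E_{m-1}-E_m>\tau$ (this is the first assertion of the theorem, in the stronger strict form), then summing over $m=1,\dots,M$ and using $E_M\ge 0$ gives $M\tau<\|{\mathbf Y}\|_{\rm F}^2$, hence $M\le\lfloor\|{\mathbf Y}\|_{\rm F}^2/\tau\rfloor$; intersecting with the trivial bound $M\le N$ established above yields the claimed $\min\{N,\lfloor\|{\mathbf Y}\|_{\rm F}^2/\tau\rfloor\}$.

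First I would quantify the drop produced by the IDENTIFY step alone. Since $\|{\mathbf a}(\omega)\|_2=1$, forming $P_m'=P_{m-1}\cup\{(\hat{\mathbf x},\hat\omega)\}$ with $\hat{\mathbf x}^{\rm T}={\mathbf a}^{\rm H}(\hat\omega){\mathbf Y}_r(P_{m-1})$ subtracts, column by column, the orthogonal projection of the current residual onto ${\rm span}\{{\mathbf a}(\hat\omega)\}$, so the Pythagorean identity gives
\begin{align}
E(P_m')&=E_{m-1}-\sum_{t=1}^{T}\left|{\mathbf a}^{\rm H}(\hat\omega){\mathbf y}_{rt}(P_{m-1})\right|^2\notag\\
&=E_{m-1}-G_{{\mathbf Y}_r(P_{m-1})}(\hat\omega).
\end{align}
Next I would lower bound $G_{{\mathbf Y}_r(P_{m-1})}(\hat\omega)$. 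Because iteration $m$ was executed, the \textbf{while} test holds, i.e.\ $\max_{\omega\in{\rm DFT}}G_{{\mathbf Y}_r(P_{m-1})}(\omega)>\tau$; and since the oversampling grid $\Omega=\{2\pi k/(\gamma N)\}$ contains the DFT grid $\{2\pi n/N\}$ (taking $\gamma$ a positive integer), the maximizer $\hat\omega$ over $\Omega$ obeys $G_{{\mathbf Y}_r(P_{m-1})}(\hat\omega)\ge\max_{\omega\in{\rm DFT}}G_{{\mathbf Y}_r(P_{m-1})}(\omega)>\tau$. Hence $E(P_m')<E_{m-1}-\tau$ already after IDENTIFY, before any refinement.

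Then I would verify that none of the remaining sub-steps of iteration $m$ can increase the residual energy, so $E_m=E(P_m)\le E(P_m')<E_{m-1}-\tau$. The SINGLE REFINEMENT and CYCLIC REFINEMENT steps are each governed by the RAC, which accepts a Newton update on a component $(\mathbf x,\omega)$ only when $G_{{\mathbf Y}_r(P\setminus\{(\mathbf x,\omega)\})}(\omega)$ strictly increases; by the same projection identity used above, and because the accompanying amplitude is re-estimated by projection, this is exactly a strict decrease of $E(\cdot)$, while a rejected step leaves $E(\cdot)$ unchanged, so $E(P_m''')\le E(P_m'')\le E(P_m')$. Finally the UPDATE step replaces the amplitudes by the least squares solution for the fixed frequency set in $P_m'''$, for which the previous amplitudes are feasible, so $E(P_m)\le E(P_m''')$. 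Chaining these inequalities closes the per-iteration bound and hence the theorem.

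The step I expect to be the main obstacle is not any single estimate but making the bookkeeping of the intermediate steps airtight: precisely, establishing the equivalence ``a Newton update accepted under the RAC $\Longleftrightarrow$ the overall residual energy strictly decreases'' (which hinges on the amplitude being re-set to the projection, so the GLRT cost $G$ and the residual energy move in opposite directions), and confirming the grid inclusion ${\rm DFT}\subseteq\Omega$ that lets the IDENTIFY value inherit the $>\tau$ bound from the stopping test. The remaining ingredients --- the Pythagorean identity, $\|{\mathbf a}(\omega)\|_2=1$, nonnegativity of $E_m$, feasibility of the old amplitudes in the least squares step, and the final division --- are routine.
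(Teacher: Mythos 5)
Your proposal is correct and follows essentially the same route as the paper's own proof: the per-iteration chain (projection identity at the IDENTIFY step, monotonicity under RAC for the refinement steps, monotonicity of the least-squares amplitude update, and the threshold bound $G_{{\mathbf Y}_r(P_{m-1})}(\hat\omega)>\tau$ inherited from the \textbf{while} test via the grid inclusion), followed by summing the drops and intersecting with the trivial bound $N$. Your explicit remark on ${\rm DFT}\subseteq\Omega$ merely makes precise a step the paper leaves implicit.
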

\begin{proof}
Assume we have detected $m$ sinusoids, the residual measurement is given by $\mathbf Y_{\rm r}(P_m) = {\mathbf Y} - \sum_{l=1}^m{\mathbf a}(\omega_l){\mathbf x}_l^{\rm T}$. The residual energy in each iteration of MNOMP satisfies the following,
\begin{align}
\Vert {{\mathbf Y_r}(P_{m-1})} \Vert^2_{\rm F}
&\overset{(a)}=\Vert {{\mathbf Y_r}(P'_m)} \Vert^2_{\rm F} + G_{\mathbf Y_r(P_{m-1})}(\hat {\omega})\overset{(b)}\geq\Vert {{\mathbf Y_r}(P'''_m)} \Vert^2_{\rm F} + G_{\mathbf Y_r(P_{m-1})}(\hat {\omega})\notag\\
&\overset{(c)}\geq\Vert {{\mathbf Y_r}(P_m)} \Vert^2_{\rm F} + G_{\mathbf Y_r(P_{m-1})}(\hat {\omega})\label{16}\\
&\overset{(d)}\geq\Vert {{\mathbf Y_r}(P_m)} \Vert^2_{\rm F} + \tau.\label{17}
\end{align}
where $\hat \omega$ denotes the detected frequency. Equality in $(a)$ holds because of the Step 5 in MNOMP where we project ${\mathbf Y}_r(P_{m-1})$ orthogonal to the subspace spanned by $\mathbf a(\hat \omega)$ to get ${\mathbf Y}_r(P'_m)$. We obtain $\left[  {\mathbf Y_r}(P_{m-1})-{\mathbf a}(\hat \omega){\mathbf x}_m^{\rm T}\right]^{\rm H}{\mathbf a}(\hat \omega)={\mathbf 0}$ and ${\mathbf x}_m^{\rm T} = \left[{\mathbf a}(\hat \omega)^{\rm H}{\mathbf Y_r}(P_{m-1})\right]/\Vert {\mathbf a}(\hat \omega)\Vert^2_{\rm{2}}$ by solving $\underset{\mathbf x_m}{\rm {min}}~\Vert {\mathbf Y_r}(P_{m-1})-{\mathbf a}(\hat \omega){\mathbf x}_m^{\rm T} \Vert_{\rm F}^2$. Then we take the Frobenius norm of both sides of ${\mathbf Y}_r(P'_m)={\mathbf Y}_r(P_{m-1}) - {\mathbf a}(\hat \omega){\mathbf x}_m^{\rm T}$ to obtain equality (a). Inequality in (b) follows from RAC, which is implemented whenever the Single Refinement step is executed and (c) is a direct consequence of the Step 10 of MNOMP, which can only cause a decrease in the residual energy. And the stopping criterion leads to (d).

From inequality (\ref{17}), we can conclude that the reduction of the residual energy caused by the detection of a new sinusoid frequency is always larger than $\tau$. Thus we can get another bound on the number of iterations of MNOMP, which is $\lfloor\Vert {\mathbf Y} \Vert_{\rm F}^2/\tau\rfloor$.
\end{proof}

\subsection{Rate of convergence}
In the noiseless scenario, the noiseless measurement ${\mathbf Y}^o$ can be written as
\begin{align}\label{model}
{\mathbf Y}^o = \sum_{k=1}^K{\mathbf a}(\omega_k){\mathbf x_k^{\rm T}}=\sum_{k=1}^K c_k {\mathbf a}(\omega_k) \phi_k^{\rm T}.
\end{align}
where $c_k = \Vert  \mathbf x_k\Vert_2>0$, $\omega_k\in[0,2\pi)$, $\phi_k = c_k^{-1}\mathbf x_k\in{\mathbb C}^{T\times 1}$ with $\Vert {\phi_k} \Vert_2=1$. Then the continuous dictionary or the set of atoms is given as
\begin{align}
{\mathcal A}\triangleq\left\{{\mathbf A}(\omega, \phi)={\mathbf a}(\omega)\phi^{\rm T} : \omega\in[0, 2\pi), \Vert \phi \Vert_2 = 1\right\}.
\end{align}

It is easy to see that $\mathbf Y^o$ is a linear combination of many atoms in $\mathcal A$. Here, we define the atomic $\ell_0$ (pseudo-) norm of $\mathbf Y\in{\mathbb C}^{N\times T}$ as the smallest number of atoms that can express it \cite{Chandrasekaran}:
\begin{align}
\Vert{\mathbf Y}\Vert_{\mathcal A, 0}=\underset{{\mathcal K}}{\rm {inf}}\left\{\mathbf Y=\sum_{k=1}^{\mathcal K}c_k{\mathbf A}(\omega_k, \phi_k), c_k\geq 0  \right\}.
\end{align}
Because the atomic $\ell_0$ norm is non-convex, we utilize convex relaxation to relax the atomic $\ell_0$ norm to the atomic norm, which is defined as the gauge function of $\rm {conv} (\mathcal A)$, where $\rm {conv} (\mathcal A)$ is the convex hull of $\mathcal A$ \cite{Chandrasekaran}:
\begin{align}
&\Vert{\mathbf Y}\Vert_{\mathcal A}
\triangleq{\rm {inf}}\{t>0:\mathbf Y\in t\rm {conv} (\mathcal A)\}\notag\\
=&{\rm {inf}}\left\{\sum_k c_k:\mathbf Y=\sum_{k}c_k{\mathbf A}(\omega_k, \phi_k),  c_k\geq0\right\}.
\end{align}
By defining $\langle \mathbf Y, \mathbf A \rangle={\rm {tr}}({\mathbf A}^{\rm H}{\mathbf Y})$ and $\langle \mathbf Y, \mathbf A\rangle_{\mathbb R} = \mathscr R(\langle \mathbf Y, \mathbf A \rangle)$, the dual norm of $\Vert{\mathbf Y}\Vert_{\mathcal A}$ can be written as
\begin{align}\label{dual}
&\Vert{\mathbf Y}\Vert_{\mathcal A}^{*}
\triangleq\underset{\Vert{\mathbf A}\Vert_{\mathcal A}\leq 1}{\rm {sup}}~\langle \mathbf Y, \mathbf A\rangle_{\mathbb R}=\underset{{\omega\in [0,2\pi)}, \Vert \phi \Vert_2 = 1}{\rm {sup}}~\langle \mathbf Y, {\mathbf a(\omega)}{\phi}^{\rm T}\rangle_{\mathbb R}\notag\\
=&\underset{{\omega\in [0,2\pi)}}{\rm {sup}}~\Vert{\mathbf Y}^{\rm H}{\mathbf a}(\omega) \Vert_2=\underset{\omega\in [0,2\pi)}{\rm {sup}} ~\sqrt{G_{\mathbf Y}(\omega)},
\end{align}
where the ${G_{\mathbf Y}(\omega)}$ is (\ref{sub}).
\begin{theorem}\label{theorem2}
Maximizing $G_{\mathbf Y}(\omega)$ (for the dictionary of unit norm sinusoids) over $[0,2\pi)$ is consistent with that over the oversampled grid $\Omega$ with oversampling factor $\gamma$. Namely, we have
\begin{align}\label{T2}
&\underset{\omega\in \Omega}{\rm {max}}~\sqrt{G_{\mathbf Y}(\omega)}\leq \underset{\omega\in [0,2\pi)}{\rm {sup}}~\sqrt{G_{\mathbf Y}(\omega)}\leq \left(1-\frac{4\pi T}{\gamma}\right)^{-1/2}~\underset{\omega\in \Omega}{\rm {max}}~\sqrt{G_{\mathbf Y}(\omega)}
\end{align}
\end{theorem}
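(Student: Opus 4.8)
The left inequality in (\ref{T2}) is trivial since $\Omega\subset[0,2\pi)$, so only the right inequality needs an argument, and the plan is the standard ``the oversampled grid is dense enough'' estimate. Because $G_{\mathbf Y}$ is continuous and $2\pi$-periodic, the supremum $M\triangleq\sup_{\omega\in[0,2\pi)}\sqrt{G_{\mathbf Y}(\omega)}$ is attained at some $\bar\omega$; let $\omega_c\in\Omega$ be a nearest grid point, so that $|\bar\omega-\omega_c|\le 2\pi/(\gamma N)$. Using the identity $\sqrt{G_{\mathbf Y}(\omega)}=\|{\mathbf Y}^{\rm H}{\mathbf a}(\omega)\|_2$ from (\ref{dual}) and the triangle inequality, $M=\|{\mathbf Y}^{\rm H}{\mathbf a}(\bar\omega)\|_2\le\|{\mathbf Y}^{\rm H}{\mathbf a}(\omega_c)\|_2+\|{\mathbf Y}^{\rm H}({\mathbf a}(\bar\omega)-{\mathbf a}(\omega_c))\|_2\le\max_{\omega\in\Omega}\sqrt{G_{\mathbf Y}(\omega)}+\|{\mathbf Y}^{\rm H}({\mathbf a}(\bar\omega)-{\mathbf a}(\omega_c))\|_2$. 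Hence the whole statement reduces to bounding the perturbation term $\|{\mathbf Y}^{\rm H}({\mathbf a}(\bar\omega)-{\mathbf a}(\omega_c))\|_2$ by a small multiple of $M$ itself.

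To do this I would use that each coordinate $p_t(\omega)\triangleq{\mathbf y}_t^{\rm H}{\mathbf a}(\omega)=N^{-1/2}\sum_{n=0}^{N-1}y_{n,t}^{*}e^{{\rm j}n\omega}$ is a trigonometric polynomial with spectrum in $\{0,\dots,N-1\}$, so by Bernstein's inequality $|p_t'(\omega)|\le(N-1)\max_{\omega}|p_t(\omega)|$, and $\max_{\omega}|p_t(\omega)|\le M$ because $|p_t(\omega)|^2\le\sum_{s=1}^T|p_s(\omega)|^2=G_{\mathbf Y}(\omega)\le M^2$. The mean value theorem then gives $|p_t(\bar\omega)-p_t(\omega_c)|\le|\bar\omega-\omega_c|(N-1)M\le(2\pi/\gamma)M$ for every $t$, and bounding the $\ell_2$ norm of the resulting $T$-vector by its $\ell_1$ norm yields $\|{\mathbf Y}^{\rm H}({\mathbf a}(\bar\omega)-{\mathbf a}(\omega_c))\|_2\le(2\pi T/\gamma)M$. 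Substituting back gives $M\le\max_{\omega\in\Omega}\sqrt{G_{\mathbf Y}(\omega)}+(2\pi T/\gamma)M$, i.e.\ $M\le(1-2\pi T/\gamma)^{-1}\max_{\omega\in\Omega}\sqrt{G_{\mathbf Y}(\omega)}$ when $\gamma>2\pi T$, and the elementary inequality $(1-x)^{-1}\le(1-2x)^{-1/2}$ on $[0,1/2)$ (checked by squaring) upgrades this to the asserted factor $(1-4\pi T/\gamma)^{-1/2}$; for $\gamma\le4\pi T$ the bound is vacuous, so nothing is lost.

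The only step with real content is the perturbation estimate, and its crux is that $\|{\mathbf Y}^{\rm H}({\mathbf a}(\bar\omega)-{\mathbf a}(\omega_c))\|_2$ must be routed through a quantity that is itself controlled by $M$ — here via Bernstein applied entrywise to the DFT polynomials $p_t$ — so that the resulting inequality in $M$ is self-closing; a cruder bound in terms of, say, $\|{\mathbf Y}\|_{\rm F}$ or the operator norm of ${\mathbf Y}$ would not produce the clean multiplicative form. Everything else is bookkeeping: the precise constant $4\pi T$ merely tracks how generously one bounds the grid gap $|\bar\omega-\omega_c|$, the degree $N-1<N$, and the $\ell_2$-versus-$\ell_1$ passage, so a sharper treatment would only tighten the constant.
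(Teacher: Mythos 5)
Your proof is correct, and it takes a recognizably different route from the paper's, even though both rest on the same underlying idea (the continuum supremum exceeds the grid maximum only by a Bernstein-type smoothness term that is itself proportional to the supremum, giving a self-closing inequality). The paper works with the squared quantity $G_{\mathbf Y}(\omega)=\sum_t|W_t(\omega)|^2$: it invokes the Lipschitz estimate $|W_t(\omega_1)|-|W_t(\omega_2)|\le 2N|\omega_1-\omega_2|\sup_\omega|W_t(\omega)|$ (Bernstein via Bhaskar et al.), bounds the difference of squares, uses $\sum_t\sup_\omega|W_t|^2\le T\sup_\omega\sum_t|W_t|^2$ and the half-grid gap $\pi/(\gamma N)$, and lands directly on the factor $4\pi T/\gamma$ multiplying $\left(\Vert\mathbf Y\Vert_{\mathcal A}^*\right)^2$. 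You instead stay at the level of $\sqrt{G_{\mathbf Y}(\omega)}=\Vert{\mathbf Y}^{\rm H}{\mathbf a}(\omega)\Vert_2$, use the triangle inequality, apply Bernstein entrywise to the polynomials $p_t$, and close the inequality with the factor $2\pi T/\gamma$, then loosen $(1-x)^{-1}\le(1-2x)^{-1/2}$ to reach the stated form. What your route buys is a strictly sharper intermediate constant: $(1-2\pi T/\gamma)^{-1}$, and even $(1-2\pi\sqrt{T}/\gamma)^{-1}$ if you use $\ell_2\le\sqrt{T}\,\ell_\infty$ instead of the crude $\ell_1$ bound; at $T=1$ this recovers the $(1-2\pi/\gamma)^{-1}$ constant of the original NOMP analysis, which the paper's own remark after Theorem 3 concedes is tighter than its $(1-4\pi T/\gamma)^{-1/2}$. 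What the paper's route buys is that the squared-cost formulation plugs directly into the convergence-rate argument of Theorem 3 without any conversion step. Two trivial points of hygiene: the mean value theorem does not apply verbatim to the complex-valued $p_t$, so phrase that step as $|p_t(\bar\omega)-p_t(\omega_c)|\le|\bar\omega-\omega_c|\sup_\omega|p_t'(\omega)|$ via the fundamental theorem of calculus; and your observation that the claim is vacuous unless $\gamma>4\pi T$ is an implicit assumption shared by the paper's proof (which divides by $1-4\pi T/\gamma$), so nothing is lost there.
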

\begin{proof}
According to (\ref{dual}), $\left(\Vert{\mathbf Y}\Vert_{\mathcal A}^{*}\right)^2$ can be expressed as
\begin{align}\label{31}
\left(\Vert{\mathbf Y}\Vert_{\mathcal A}^{*}\right)^2
&= \underset{{\omega\in [0,2\pi)}}{\rm {sup}}~\Vert{\mathbf Y}^{\rm H}{\mathbf a}(\omega) \Vert_2^2=\underset{{\omega\in [0,2\pi)}}{\rm {sup}}~\sum_{t=1}^T\left|\frac{1}{\sqrt{N}}\sum_{n=1}^N \phi_{n,t}^*e^{j(n-1)\omega}  \right|^2\notag\\
&=\underset{{\omega\in [0,2\pi)}}{\rm {sup}}~\sum_{t=1}^T\left|W_t(\omega) \right|^2,
\end{align}
where $\phi_{n,t}$ is the $(n,t)$th entry of $\mathbf Y$, and $W_t(\omega)\triangleq \frac{1}{\sqrt{N}}\sum_{n=1}^N\phi_{n,t}^*e^{j(n-1){\omega}}$. For $\omega_1, \omega_2\in [0,2\pi)$, according to Bernstein's theorem \cite{Schaeffer} and the result in \cite[Appendix C]{Bhaskar}, we can obtain
\begin{align}\label{34}
|W_t(\omega_1)| - |W_t(\omega_2)| \leq 2 N|\omega_1-\omega_2|\underset{{\omega\in [0,2\pi)}}{\rm {sup}}~|W_t(\omega)|.
\end{align}
Then we have
\begin{small}
\begin{align}
&\sum_{t=1}^T|W_t(\omega_1)|^2 - \sum_{t=1}^T|W_t(\omega_2)|^2\overset{(a)}\leq \sum_{t=1}^T(|W_t(\omega_1)| + |W_t(\omega_2)|)\times 2 N |\omega_1 - \omega_2|\notag\\
&\times \underset{{\omega\in [0,2\pi)}}{\rm {sup}}~|W_t(\omega)|\overset{(b)}\leq 4 N  |\omega_1 - \omega_2| T \underset{{\omega\in [0,2\pi)}}{\rm {sup}}~\sum_{t=1}^T|W_t(\omega)|^2\label{36}=4 N  |\omega_1 - \omega_2|  T \left(\Vert{\mathbf Y}\Vert_{\mathcal A}^{*}\right)^2 ,\notag
\end{align}
\end{small}
where inequality (a) follows from (\ref{34}), inequality (b) is from the fact
\begin{align}
\sum_{t=1}^T\underset{{\omega\in [0,2\pi)}}{\rm {sup}}~|W_t(\omega)|^2\leq T\underset{{\omega\in [0,2\pi)}}{\rm {sup}}~\sum_{t=1}^T |W_t(\omega)|^2,
\end{align}
and the last equality follows by (\ref{31}).

Let $\omega_2$ take any value of the grid points $\left\{0,\frac{2\pi}{\gamma N},\cdots, \frac{2\pi(\gamma N - 1)}{\gamma N}\right\}$, we have
\begin{align}
\left(\Vert{\mathbf Y}\Vert_{\mathcal A}^{*}\right)^2
&=\underset{{\omega\in [0,2\pi)}}{\rm {sup}}~\sum_{t=1}^T\left|W_t(\omega) \right|^2=\underset{{\omega\in [0,2\pi)}}{\rm {sup}}~\sum_{t=1}^T\left(\left|W_t(\omega) \right|^2 -\left|W_t\left(\widetilde \omega\right)\right|^2+\left|W_t\left( \widetilde \omega\right)\right|^2\right)\notag\\
&\leq \underset{{\omega\in [0,2\pi)}}{\rm {sup}}~\sum_{t=1}^T\left(\left|W_t(\omega) \right|^2 -\left|W_t\left( \widetilde \omega\right)\right|^2\right)+\underset{ d=0,\cdots,\gamma N-1}{\rm {max}}~\sum_{t=1}^T \left|W_t\left(\frac{2\pi d}{\gamma N}\right)\right|^2\notag\\
& \leq 4N |\omega - \widetilde \omega|T \left(\Vert{\mathbf Y}\Vert_{\mathcal A}^{*}\right)^2 + \underset{ d=0,\cdots,\gamma N-1}{\rm {max}}~\sum_{t=1}^T \left|W_t\left(\frac{2\pi d}{\gamma N}\right)\right|^2\notag\\
&\leq 4N\frac{2\pi}{2\gamma N}T\left(\Vert{\mathbf Y}\Vert_{\mathcal A}^{*}\right)^2 + \underset{ d=0,\cdots,\gamma N-1}{\rm {max}}~\sum_{t=1}^T \left|W_t\left(\frac{2\pi d}{\gamma N}\right)\right|^2\notag\\
&= \frac{4\pi T}{\gamma} \left(\Vert{\mathbf Y}\Vert_{\mathcal A}^{*}\right)^2+ \underset{ d=0,\cdots,\gamma N-1}{\rm {max}}~\sum_{t=1}^T \left|W_t\left(\frac{2\pi d}{\gamma N}\right)\right|^2
\end{align}
Since the maximum on the grid is a lower bound for $\left(\Vert{\mathbf Y}\Vert_{\mathcal A}^{*}\right)^2$, we have
\begin{align}
& \left( \underset{d=0,\cdots,\gamma N-1}{\rm {max}}~\sum_{t=1}^T \left|W_t\left(\frac{2\pi d}{\gamma N}\right)\right|^2\right)^{1/2}
 \leq \Vert{\mathbf Y}\Vert_{\mathcal A}^{*}\notag\\
& \leq \left(1-\frac{4\pi T}{\gamma}\right)^{-1/2}\left( \underset{d=0,\cdots,\gamma N-1}{\rm {max}}~\sum_{t=1}^T \left|W_t\left(\frac{2\pi d}{\gamma N}\right)\right|^2\right)^{1/2}
 \end{align}
 Thus,
 \begin{align}
 &\underset{\omega\in \Omega}{\rm {max}}~\sqrt{G_{\mathbf Y}(\omega)}\leq \underset{\omega\in [0,2\pi)}{\rm {sup}}~\sqrt{G_{\mathbf Y}(\omega)} \leq \left(1-\frac{4\pi T}{\gamma}\right)^{-1/2}~\underset{\omega\in \Omega}{\rm {max}}~\sqrt{G_{\mathbf Y}(\omega)}.
\end{align}
\end{proof}

\begin{remark}
Theorem 2 shows that the $\underset{\omega\in [0,2\pi)}{\rm {sup}}~\sqrt{G_{\mathbf Y}(\omega)}$ can be upper bounded by $\underset{\omega\in \Omega}{\rm {max}}~\sqrt{G_{\mathbf Y}(\omega)}$ which takes values at discreet grids within a scale factor $\left(1-\frac{4\pi T}{\gamma}\right)^{-1/2}$. It can be seen that the inequality is tight when $\gamma\rightarrow \infty $. To ensure that the scale factor is constant, the oversampling rate $\gamma$ must increase linearly with the number of snapshots $T$.
\end{remark}

To prove Theorem 3, the following lemma  in \cite{Barron} is introduced
\begin{lemma}\label{lemma1}
\cite{Barron} Assume $\{a_n\}_{n\geq 0}$ is a decreasing sequence of nonnegative numbers such that $a_0\leq U$ and
\begin{align}
a_n\leq a_{n-1}\left(1-\frac{a_{n-1}}{U} \right), \forall n>0,
\end{align}
then we have $a_n\leq \frac{U}{n+1}$ for all $n\geq 0$.
\end{lemma}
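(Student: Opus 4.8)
The plan is to prove $a_n \le U/(n+1)$ by induction on $n$, using the reciprocal substitution $b_n \triangleq U/a_n$ to turn the multiplicative recursion into an additive one. The base case is immediate: $a_0 \le U = U/(0+1)$. Before running the inductive step I would clear away the degenerate situations. If $U=0$ the statement is vacuous, so assume $U>0$. If $a_0 = U$, the recursion gives $a_1 \le U(1-1)=0$, hence $a_1 = a_2 = \cdots = 0$ and every term obeys the bound; so assume $a_0 < U$. Since $\{a_n\}$ is decreasing, this forces $0 \le a_n \le a_0 < U$ for all $n$. Finally, if $a_m = 0$ for some $m$, then $a_n = 0 \le U/(n+1)$ for all $n \ge m$, so it remains only to bound those $a_n$ that are strictly positive; from here on I fix such an $n$ and assume $a_0,\dots,a_n > 0$.

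Next I would establish the one-step inequality $b_n \ge b_{n-1} + 1$. Since $0 < a_{n-1} < U$ we have $0 < a_{n-1}/U < 1$, so dividing $a_n \le a_{n-1}\left(1 - a_{n-1}/U\right)$ by the positive quantity $a_n a_{n-1}\left(1 - a_{n-1}/U\right)$ and then applying the elementary bound $1/(1-x) \ge 1+x$ valid for $x \in [0,1)$ gives
\[
\frac{1}{a_n} \;\ge\; \frac{1}{a_{n-1}}\cdot\frac{1}{1 - a_{n-1}/U} \;\ge\; \frac{1}{a_{n-1}}\left(1 + \frac{a_{n-1}}{U}\right) \;=\; \frac{1}{a_{n-1}} + \frac{1}{U},
\]
that is, $b_n \ge b_{n-1} + 1$. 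Telescoping this from $b_0 = U/a_0 \ge 1$ (using $a_0 \le U$) yields $b_n \ge b_0 + n \ge n+1$, which is exactly $a_n \le U/(n+1)$ and completes the induction.

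The argument is short, and the only real obstacle is bookkeeping at the boundary: the reciprocal manipulation breaks down precisely when $a_{n-1} \in \{0, U\}$, which is why I would peel those cases off at the outset as above. An alternative, purely real-variable route runs the induction directly on $a_n$ through the function $f(x) \triangleq x - x^2/U$: since $f$ is increasing on $[0, U/2]$ and the inductive hypothesis $a_{n-1} \le U/n$ places $a_{n-1}$ in that interval once $n \ge 2$, monotonicity gives $a_n \le f(U/n) = U(n-1)/n^2 \le U/(n+1)$, the last step being $(n-1)(n+1) \le n^2$, with the case $n=1$ handled separately via $a_1 \le \max_{0 \le x \le U} f(x) = U/4 \le U/2$. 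Both routes work; I would present the reciprocal one, since it avoids the extra case split at $n=1$.
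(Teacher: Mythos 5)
Your proof is correct. Note that the paper itself gives no proof of this lemma---it is stated as a known result and attributed to \cite{Barron}---so there is no in-paper argument to compare against; your reciprocal substitution $b_n = U/a_n$, the one-step bound $b_n \ge b_{n-1}+1$ via $1/(1-x)\ge 1+x$, and the telescoping to $b_n \ge n+1$ is exactly the standard argument used in the greedy-approximation literature for this recursion, so you have in effect supplied the omitted proof. The bookkeeping you do at the boundary is the only place where care is genuinely needed (the reciprocal step fails if $a_{n-1}\in\{0,U\}$), and you handle it cleanly: $a_0=U$ forces $a_n=0$ for $n\ge 1$, a vanishing term forces all later terms to vanish, and otherwise monotonicity keeps every $a_k$ in $(0,U)$ so each telescoping step is legitimate. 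The alternative induction through $f(x)=x-x^2/U$ is also sound (including the separate $n=1$ step via $\max f = U/4$), and your choice to present the reciprocal route is reasonable since it avoids that case split; either version would serve as a complete substitute for the citation.
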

\begin{theorem}\label{theorem3}
For all $\mathbf Y$ satisfying $\Vert {\mathbf Y} \Vert_{\mathcal A} < \infty$, the residual energy of MNOMP at the $m$th iteration satisfies
\begin{align}
\Vert {\mathbf Y}_r(P_m)\Vert_{\rm F} \leq (m+1)^{-1/2}\left(1-\frac{4\pi T}{\gamma}\right)^{-1/2}\Vert {\mathbf Y}\Vert_{\mathcal A}.
\end{align}
\end{theorem}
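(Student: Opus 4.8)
The plan is to show that the residual energy of MNOMP obeys a recursion of the exact shape required by Lemma~\ref{lemma1}, with $U=\left(1-\tfrac{4\pi T}{\gamma}\right)^{-1}\Vert\mathbf Y\Vert_{\mathcal A}^2$, and then read off the bound on $\Vert\mathbf Y_r(P_m)\Vert_{\rm F}$ by taking square roots. Set $a_m\triangleq\Vert\mathbf Y_r(P_m)\Vert_{\rm F}^2$; this is a nonnegative, non-increasing sequence by inequalities $(a)$--$(c)$ in the proof of Theorem~1, and $a_0=\Vert\mathbf Y\Vert_{\rm F}^2\le\Vert\mathbf Y\Vert_{\mathcal A}^2\le U$ (the first inequality because a single atom realizing $\Vert\mathbf Y\Vert_{\mathcal A}$ has Frobenius norm at most $\Vert\mathbf Y\Vert_{\mathcal A}$, or more cleanly via $\Vert\mathbf Y\Vert_{\rm F}=\langle\mathbf Y,\mathbf Y/\Vert\mathbf Y\Vert_{\rm F}\rangle_{\mathbb R}\le\Vert\mathbf Y\Vert_{\mathcal A}\Vert\mathbf Y\Vert_{\mathcal A}^{*}$ together with $\Vert\mathbf Y\Vert_{\mathcal A}^{*}\le\Vert\mathbf Y\Vert_{\rm F}$). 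So the whole problem reduces to proving the one-step decrease
\begin{align}
a_m\le a_{m-1}\left(1-\frac{a_{m-1}}{U}\right).
\end{align}

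The key step is to lower bound the energy removed in iteration $m$ by $G_{\mathbf Y_r(P_{m-1})}(\hat\omega)/T$ times nothing --- more precisely, from inequalities $(a)$--$(c)$ of Theorem~1 we already have $a_{m-1}-a_m\ge G_{\mathbf Y_r(P_{m-1})}(\hat\omega)$, where $\hat\omega$ is the grid maximizer, i.e. $G_{\mathbf Y_r(P_{m-1})}(\hat\omega)=\max_{\omega\in\Omega}G_{\mathbf Y_r(P_{m-1})}(\omega)$. Now apply Theorem~\ref{theorem2} to the matrix $\mathbf Y_r(P_{m-1})$: it gives
\begin{align}
\max_{\omega\in\Omega}\sqrt{G_{\mathbf Y_r(P_{m-1})}(\omega)}\ge\left(1-\frac{4\pi T}{\gamma}\right)^{1/2}\sup_{\omega\in[0,2\pi)}\sqrt{G_{\mathbf Y_r(P_{m-1})}(\omega)}=\left(1-\frac{4\pi T}{\gamma}\right)^{1/2}\Vert\mathbf Y_r(P_{m-1})\Vert_{\mathcal A}^{*},
\end{align}
the last equality by (\ref{dual}). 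The remaining ingredient is the standard dual-norm bound $\Vert\mathbf Y_r(P_{m-1})\Vert_{\mathcal A}^{*}\ge\Vert\mathbf Y_r(P_{m-1})\Vert_{\rm F}^2/\Vert\mathbf Y_r(P_{m-1})\Vert_{\mathcal A}$: indeed $\Vert\mathbf Y_r\Vert_{\rm F}^2=\langle\mathbf Y_r,\mathbf Y_r\rangle_{\mathbb R}\le\Vert\mathbf Y_r\Vert_{\mathcal A}\,\Vert\mathbf Y_r\Vert_{\mathcal A}^{*}$ by the definition of the dual norm. Finally one checks $\Vert\mathbf Y_r(P_{m-1})\Vert_{\mathcal A}\le\Vert\mathbf Y\Vert_{\mathcal A}$ --- the residual is obtained from $\mathbf Y$ by an orthogonal projection (Step~10), and orthogonal projections do not increase the atomic norm, since the atoms $\mathbf a(\omega)\phi^{\rm T}$ are mapped to matrices of the same form with norm $\le1$ (the factor $\mathbf a(\omega)$ is only scaled down under projection onto the orthogonal complement of the current subspace, and the projection acts on the left, preserving the rank-one $\phi$-structure). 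Chaining these gives
\begin{align}
a_{m-1}-a_m\ge G_{\mathbf Y_r(P_{m-1})}(\hat\omega)\ge\left(1-\frac{4\pi T}{\gamma}\right)\frac{a_{m-1}^2}{\Vert\mathbf Y\Vert_{\mathcal A}^2}=\frac{a_{m-1}^2}{U},
\end{align}
which is exactly the recursion hypothesis; Lemma~\ref{lemma1} then yields $a_m\le U/(m+1)$, and taking square roots gives the claimed bound.

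I expect the main obstacle to be the justification that the atomic norm is non-increasing under the Step~10 least-squares/projection update, i.e.\ $\Vert\mathbf Y_r(P_{m-1})\Vert_{\mathcal A}\le\Vert\mathbf Y\Vert_{\mathcal A}$; one has to argue carefully that projecting $\mathbf Y$ onto the orthogonal complement of $\mathrm{span}\{\mathbf a(\omega_1),\dots\}$ (acting on columns) sends each atom $\mathbf a(\omega)\phi^{\rm T}$ to $(\Pi\mathbf a(\omega))\phi^{\rm T}$ with $\Vert\Pi\mathbf a(\omega)\Vert_2\le\Vert\mathbf a(\omega)\Vert_2=1$, so that a decomposition of $\mathbf Y$ of total weight $\Vert\mathbf Y\Vert_{\mathcal A}$ maps to a decomposition of $\mathbf Y_r$ of weight no larger. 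A secondary subtlety is making sure the "at least $G_{\mathbf Y_r(P_{m-1})}(\hat\omega)$" drop is legitimately attributed to this iteration even though Step~10 re-solves for all amplitudes; but that is precisely inequality $(c)$ in Theorem~1, so it can be cited directly.
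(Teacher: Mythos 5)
Your overall skeleton is the same as the paper's: lower-bound the per-iteration energy drop by $G_{\mathbf Y_r(P_{m-1})}(\hat\omega)$ via inequalities $(a)$--$(c)$ of Theorem~1, relate the grid maximum to the continuous supremum through Theorem~\ref{theorem2}, obtain the recursion $a_m\le a_{m-1}(1-a_{m-1}/U)$, and invoke Lemma~\ref{lemma1}; your verification of the initial condition $a_0\le\Vert\mathbf Y\Vert_{\mathcal A}^2$ is also fine (and cleaner than the paper's explicit computation). However, there is a genuine gap at the one step you yourself flag: you pair the residual with itself, $\Vert\mathbf Y_r\Vert_{\rm F}^2\le\Vert\mathbf Y_r\Vert_{\mathcal A}\Vert\mathbf Y_r\Vert_{\mathcal A}^{*}$, and then need $\Vert\mathbf Y_r(P_{m-1})\Vert_{\mathcal A}\le\Vert\mathbf Y\Vert_{\mathcal A}$. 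Your justification — that the projection sends each atom $\mathbf a(\omega)\phi^{\rm T}$ to $(\Pi\mathbf a(\omega))\phi^{\rm T}$ with $\Vert\Pi\mathbf a(\omega)\Vert_2\le 1$, hence atomic norm $\le 1$ — is backwards: the atomic norm dominates the Frobenius norm ($\Vert\mathbf Z\Vert_{\mathcal A}\ge\Vert\mathbf Z\Vert_{\rm F}$, since $\Vert\mathbf Z\Vert_{\mathcal A}^{*}\le\Vert\mathbf Z\Vert_{\rm F}$), so a matrix of Frobenius norm at most one can have atomic norm far larger; e.g.\ the spike $\mathbf e_1\phi^{\rm T}$ has Frobenius norm $1$ but atomic norm at least $\sqrt{N}$, because $\sup_\omega\Vert\phi\,\mathbf e_1^{\rm T}\mathbf a(\omega)\Vert_2=1/\sqrt{N}$. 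A projected steering vector is in general not a scaled atom, so monotonicity of $\Vert\cdot\Vert_{\mathcal A}$ under the Step~10 projection is neither proved by your argument nor obvious (and it is not needed).

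The paper avoids this entirely by pairing the residual with the \emph{original} measurement: since Step~10 makes $\mathbf Y_r(P_{m-1})$ orthogonal to the span of the detected atoms and $\mathbf Y-\mathbf Y_r(P_{m-1})$ lies in that span, one has $\Vert\mathbf Y_r(P_{m-1})\Vert_{\rm F}^2=\langle\mathbf Y_r(P_{m-1}),\mathbf Y\rangle_{\mathbb R}\le\Vert\mathbf Y\Vert_{\mathcal A}\,\Vert\mathbf Y_r(P_{m-1})\Vert_{\mathcal A}^{*}$, and then Theorem~\ref{theorem2} and the choice of $\hat\omega$ in Step~5 give $\Vert\mathbf Y_r(P_{m-1})\Vert_{\rm F}^2\le\Vert\mathbf Y\Vert_{\mathcal A}\bigl(1-4\pi T/\gamma\bigr)^{-1/2}\sqrt{G_{\mathbf Y_r(P_{m-1})}(\hat\omega)}$. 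This yields exactly the recursion you want, with $\Vert\mathbf Y\Vert_{\mathcal A}$ appearing directly and no bound on $\Vert\mathbf Y_r\Vert_{\mathcal A}$ required; the remainder of your argument (Lemma~\ref{lemma1}, initial condition, taking square roots) then goes through unchanged. As written, though, your proof has a missing — and by your own route, likely unprovable — step, so it is not complete.
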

\begin{proof}
From (\ref{16}), we have
\begin{align}\label{25}
\Vert {{\mathbf Y_r}(P_m)} \Vert^2_{\rm F} \leq \Vert {{\mathbf Y_r}(P_{m-1})} \Vert^2_{\rm F} - G_{\mathbf Y_r(P_{m-1})}(\hat {\omega}).
\end{align}
${\mathbf Y}_r(P_{m-1})$ is a direct consequence of projecting $\mathbf Y$ orthogonal to the subspace spanned by $P_{m-1}$, therefore
\begin{align}
&\Vert {{\mathbf Y_r}(P_{m-1})} \Vert_{\rm F}^2
=\langle{\mathbf Y}_r(P_{m-1}), \mathbf Y  \rangle_{\mathbb R}\overset{(a)}\leq \Vert {{\mathbf Y}} \Vert_{\mathcal A}  \Vert {{\mathbf Y}_r(P_{m-1})} \Vert^*_{\mathcal A}\notag\\
=&\Vert {{\mathbf Y}} \Vert_{\mathcal A}  \underset{\omega\in [0,2\pi)}{\rm {sup}}~\sqrt{G_{\mathbf Y_r(P_{m-1})}({\omega})}\overset{(b)} \leq \Vert {{\mathbf Y}} \Vert_{\mathcal A}  \left(1-\frac{4\pi T}{\gamma}\right)^{-1/2} \underset{\omega\in \Omega}{\rm {max}}\sqrt{G_{\mathbf Y_r(P_{m-1})}({\omega})},\notag
\end{align}
where inequality (a) follows from the following
\begin{align}
\Vert {{\mathbf Y_r}(P_{m-1})} \Vert_{\mathcal A}^*
&= \underset{\Vert \mathbf Y\Vert_{\mathcal A}\leq1}{\rm {sup}}~\langle\mathbf Y,{\mathbf Y}_r(P_{m-1})  \rangle_{\mathbb R}\geq \left\langle \frac{\mathbf Y}{\Vert \mathbf Y\Vert_{\mathcal A}},{\mathbf Y}_r(P_{m-1}) \right \rangle_{\mathbb R}
\end{align}
and equality (b) is obtained from Theorem \ref{theorem2}. From the step 5 of MNOMP, we have
\begin{align}\label{omega}
\hat {\omega} =  \underset{\omega\in \Omega}{\rm {arg~max}} \sqrt{G_{\mathbf Y_r(P_{m-1})}({\omega})},
\end{align}
By defining $\eta\triangleq \Vert {{\mathbf Y}} \Vert_{\mathcal A}\left(1-\frac{4\pi T}{\gamma}\right)^{-1/2}$ and combing with (\ref{omega}), we have
\begin{align}\label{26}
\Vert {{\mathbf Y_r}(P_{m-1})} \Vert_{\rm F}^2 \leq \eta  \sqrt{G_{\mathbf Y_r(P_{m-1})}(\hat {\omega})}.
\end{align}
Combining (\ref{25}) and (\ref{26}), yields
\begin{align}
\Vert {{\mathbf Y_r}(P_{m})} \Vert_{\rm F}^2 \leq \Vert {{\mathbf Y_r}(P_{m-1})} \Vert_{\rm F}^2  \left(1-\eta^{-2} \Vert {{\mathbf Y_r}(P_{m-1})} \Vert_{\rm F}^2 \right).
\end{align}
By using Lemma \ref{lemma1} and the fact that
\begin{align}
\Vert {{\mathbf Y_r}(P_{0})} \Vert_{\rm F}^2 = \Vert {\mathbf Y} \Vert_{\rm F}^2\overset{(a)}\leq \Vert {\mathbf Y} \Vert_{\mathcal A}^2\leq \eta^2,
\end{align}
where inequality $(a)$ follows from
\begin{align}\label{51}
 \Vert {\mathbf Y} \Vert_{\rm F}^2&=\sum_{k=1}^K\sum_{l=1}^K \left({\mathbf x}_k^{\rm T}{\mathbf x}_l^*\right)\left( {\mathbf a}^{\rm H}(\omega_l) {\mathbf a}(\omega_k)\right) \leq \sum_{k=1}^K\sum_{l=1}^K |{\mathbf x}_l^{\rm H}{\mathbf x}_k| |{\mathbf a}^{\rm H}(\omega_k) {\mathbf a}(\omega_l)| \notag\\
 &\leq \sum_{k=1}^K\sum_{l=1}^K |{\mathbf x}_l^{\rm H}{\mathbf x}_k| \leq \sum_{k=1}^K\sum_{l=1}^K \Vert\mathbf x_k \Vert_2\Vert\mathbf x_l \Vert_2 =\Vert {\mathbf Y} \Vert_{\mathcal A}^2
\end{align}
we have
\begin{align}
\Vert {{\mathbf Y_r}(P_{m})} \Vert_{\rm F}^2\leq \frac{\eta^2}{m+1}.
\end{align}
In other words,
\begin{align}\label{conver}
\Vert {{\mathbf Y_r}(P_{m})} \Vert_{\rm F}\leq (m+1)^{-1/2}\left(1-\frac{4\pi T}{\gamma}\right)^{-1/2}\Vert {{\mathbf Y}} \Vert_{\mathcal A }.
\end{align}
This proves Theorem \ref{theorem3}.
\end{proof}


Note that the component in (\ref{conver}) is $\left(1-{4\pi T}/{\gamma}\right)^{-1/2}$ instead of $\left(1-{2\pi }/{\gamma}\right)^{-1}$  in \cite{Mamandipoor}. Given $T = 1$, our conclusion isn't consistent with that in \cite{Mamandipoor}. In fact, due to $\left(1-{4\pi T}/{\gamma}\right)^{-1/2} \geq \left(1-{2\pi }/{\gamma}\right)^{-1}$, the bound on the rate of convergence is tighter in the single snapshot scenario. The reason is that the bound from Theorem 2 is looser than that of \cite[Theorem 2]{Mamandipoor}.

\begin{remark}
For the above convergence analysis of the proposed MNOMP algorithm, no separation condition is needed. It is worth noting that, no separation condition is also imposed for the analysis of ${\rm P}_{\rm oe}$ and ${\rm P}_{\rm miss}$. Nevertheless, as shown in Section IV, to obtain the probability of overestimating, the residue is approximated as the noise (${\mathbf y}_{rt}(P) \approx {\mathbf z_t}$). This means that MNOMP detects all the true sinusoids approximately, which holds true with high probability if the frequencies are well separated. For the probability of miss, we claim that there is no inter-sinusoidal interference. Such a condition also implies that the frequencies are well separated.
\end{remark}
\subsection{Empirical rate of convergence}
The relative residual energy of the $i$th iteration (averaged over 300 runs) versus the number of iterations in a noiseless scenario is plotted to show the effects of the refinement steps on the convergence of MNOMP, defined as $20{\rm {log}_{10}}\left( \Vert {{\mathbf Y_r}(P_{m})} \Vert_{\rm F}/ \Vert {{\mathbf Y}} \Vert_{\mathcal A }\right)$. Here, we compare MNOMP in the scenarios of $T = 1$ amd $T = 10$ to the following variants of OMP to show the improvements brought by the refinement steps.

\emph{MNOMP without cyclic refinements (MNOMP-)}: This is an algorithm that has a nearly comparable performance with OMP \cite{Tropp111} over the continuum of atoms. We use MNOMP to implement this method by setting the number of Cyclic Refinement Steps to 0. This algorithm lies in the class of forward greedy methods because it doesn't have a feedback mechanism. Hence, our analysis is also applicable to this method.

\emph{Multi-snapshot Discretized OMP (MDOMP)}: If we skip the Single and Cyclic Refinement Steps, then we can obtain a standard OMP applied to the oversampled grid $\Omega$. Note that this algorithm can be viewed as a special case of MNOMP, so the convergence analysis results are also valid.

The parameters are set as follows: $K=16$, $N=64$, $T=10$, $R_s = 1$, $R_c = 1$, ${\rm {SNR}}=25~{\rm dB}$, $\Delta \omega_{\rm {min}}=2.5\Delta \omega_{\rm {DFT}}$.
\begin{figure}
 \centering
  \includegraphics[width=80mm]{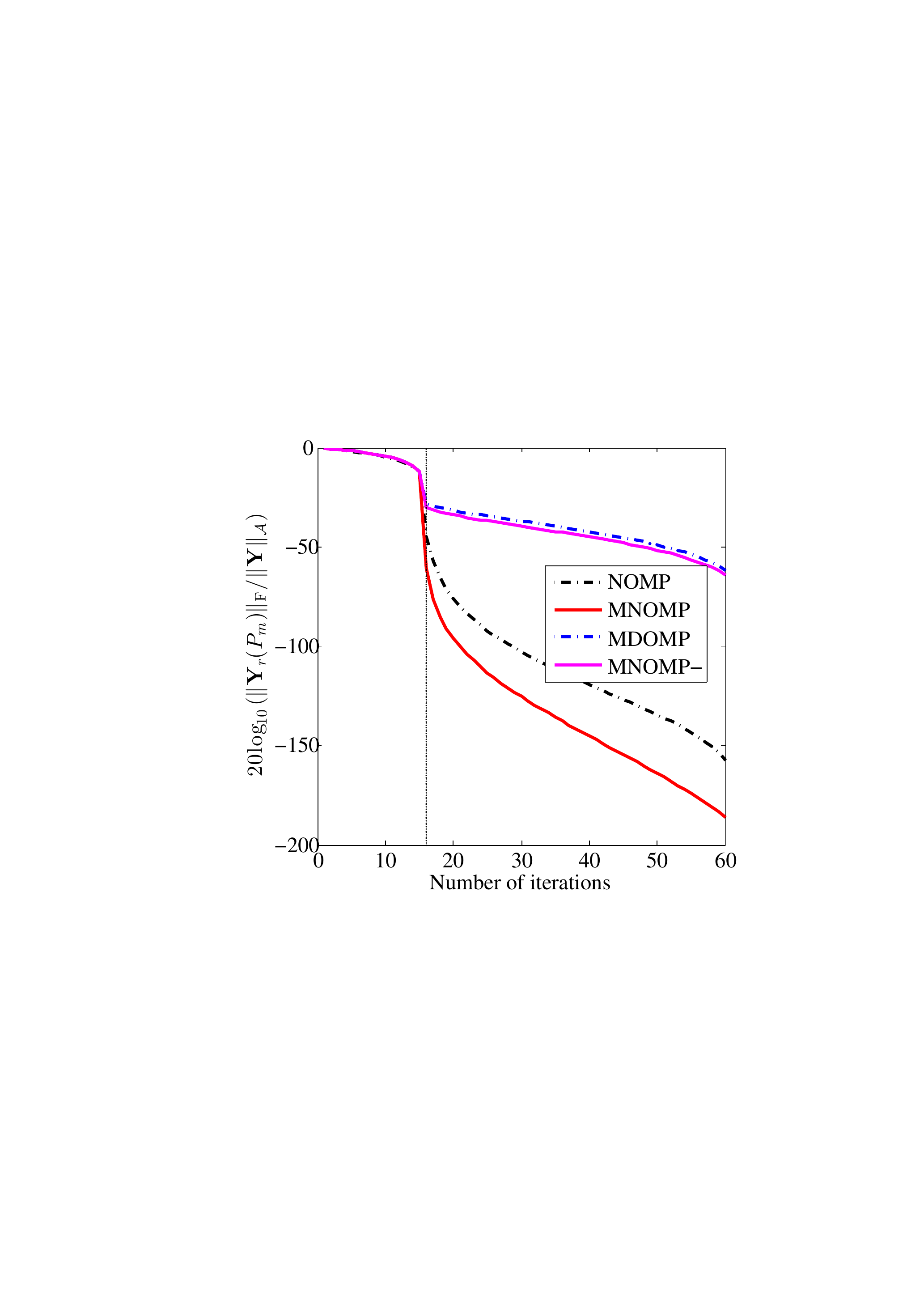}\\
  \caption{Convergence rates at noiseless case.}\label{Fig2}
\end{figure}
From Fig. \ref{Fig2}, it can be shown that a small oversampling factor $(\gamma=4)$ with single refinement step (MNOMP-) has a slightly faster convergence than a large oversampling factor $(\gamma=20)$ with no refinements (MDOMP). Furthermore, we can see a large gap between MNOMP and the other two algorithms, which means that the Cyclic Refinement Steps lead to an extremely fast convergence for MNOMP. In Fig. \ref{Fig2}, we can see that the relative residual energy drops sharply at the 16th iteration, which is equal to the number of sinusoids in the mixture $K$. Furthermore, increasing the number of snapshots improves the convergence rate.
\section{Simulation}
In this section, we conduct numerical simulations to compare the performance of MNOMP against other methods in terms of estimation accuracy in various scenarios.


\emph{Benchmarks:} All algorithms are compared against DFT method implemented by coarsely picking out the top $K$ peaks, and the oversampling rate for DFT is set to 4.


\emph{Newtonized BPDN:} For the sparse method inspired by sparse representation, we employ the SPGL1 toolbox \cite{Van2} to solve the $l_{2,1}$ minimization problem, also known as the MMV version of BPDN, we denote this method simply as BPDN in the following, which is defined as
\begin{align}
{\rm {min}}~\Vert\mathbf X \Vert_{1,2}~{\rm {subject~to}}~\Vert {\mathbf A}{\mathbf X} - {\mathbf Y} \Vert_{\rm F}\leq \tau,
\end{align}
where $\Vert\mathbf X\Vert_{1,2}$ denotes the sum of the two-norms of the rows of ${\mathbf X}$ and $\tau$ is the measure of the noise level. Newtonized BPDN can be viewed as an extension of the BPDN method. By applying this toolbox to model (\ref{multisnap}), where the oversampling rate is set to 4 and $\tau = \sqrt{NT\sigma^2}$, we then obtain the optimized $\mathbf X$. By sorting the $l_2$ norms of every row of the estimated $\mathbf X$ in a descending order, we choose the $K$ frequencies that correspond to the top $K$ $l_2$ norms of every row of the estimated $\mathbf X$ as the estimated frequencies. To avoid the frequency splitting phenomenon \cite{Yang2, Yang3}, we impose an extra procedure to the Newtonized BPDN method to cope with two special cases. On the one hand, when the interval between two adjacent estimated frequencies sorted in an ascending order is smaller or equal to $2\pi/(\gamma N)$, we eliminate the latter frequency. At the same time we add the frequency corresponding to the top ($K+1$)th $l_2$ norms of the row of the estimated $\mathbf X$. On the other hand, we also adopt the afore-mentioned step to improve the estimation performance when the first and last element of the sorted frequencies are equal to 0 and the last sampling frequency, respectively. We recycle the whole procedure until there is no occurrence of those two incidents. Then we apply the cyclic refinement step of MNOMP to the estimated frequencies, and we set $R_c = 1$, $R_s = 1$ in this case.

\emph{Atomic Norm Based Approaches:} The SPA \cite{Yang1}, reweighed atomic norm (RAM) \cite{Yang3}, the signal denoising for MMV model  \cite[equation (22)]{Chi} are implemented to make performance comparison. For clarity, the approach in \cite[equation (22)]{Chi} is termed as AST-SD. SPA and RAM are implemented by CVX \cite{CVX1}, while AST-SD is implemented via ADMM. For the three algorithms, we input $K$ to estimate the frequencies, while RAM automatically estimate the number of sinusoids.

\emph{MNOMP:} Parameters in various settings are summarized in Table \ref{table2}. The overestimating probability is set as $P_{\rm oe}=0.01$.

%
%
%

{\emph{{Simulation Set-up:}} The original frequencies of the sinusoids in the mixture are sampled from $[0, 2\pi)$ and meet the corresponding minimum separation criterion. The detailed settings are shown in Table \ref{table1}.}

{In the following, we aim at obtaining the frequency estimation accuracy performance of MNOMP as compared to the other methods in terms of mean squared error, recovery probability and DOA application.}
\begin{table}[h!t]
    \begin{center}
\caption{Settings of different scenarios.}\label{table1}\label{table2}
    \begin{tabular}{|c|c|c|c|c|}
            \hline
            Scenarios & SNR (${\rm {dB}}$) & $\Delta\omega_{\rm {min}}/\Delta\omega_{\rm {DFT}}$& $R_c$& $R_s$\\ \hline
            1 &   ${\rm SNR}_{\rm {nom}}$ & 2&1&1\\ \hline
            2 &   ${\rm SNR}_{\rm {nom}}$ & 1&3&1\\ \hline
             \end{tabular}
    \end{center}
\end{table}

\subsection{Normalized MSE vs number of snapshots T}
{ \emph{{Simulation Set-up:}} Here, we concentrate on estimating a mixture of $K=16$ complex sinusoids with length $N = 50$. Each scenario, as defined by $\Delta\omega_{\rm {min}}$ and SNR (\ref{snr}) is implemented over 300 MC trials. In scenario 1 and 2, ${\rm {SNR}_{\rm {nom}}}$ is set as 10 ${\rm {dB}}$ for all sinusoids in the mixture. It should be mentioned that all algorithms are compared against the DFT method, which is implemented by coarse peak picking. We regard the mean value of CRB of all $K$ sinusoids as a measure of optimality. The NMSE is defined as ${\rm E}\left[ \Vert {\boldsymbol \omega}_{\rm {est}} -{\boldsymbol \omega}_{\rm {true}}\Vert_2^2\right]/\Delta_{\rm {DFT}}^2$.}

Fig. \ref{SNAP1} and Fig. \ref{SNAP2} have demonstrated the NMSE of all algorithms versus the number of snapshots in scenario $1$ and scenario $2$, respectively. It is noted that for MNOMP and RAM algorithm, which assume that the model order $K$ is unknown, we only calculate the NMSE when the model order $K$ is successfully detected. Correspondingly, we present the recovery probability of both algorithms in Fig. \ref{SNAP1} (b) and Fig. \ref{SNAP2} (b).

For scenario $1$, it can be seen that all algorithms benefit from the snapshots. There exists a performance gap between the SPA and CRB, while MNOMP, RAM and NBPDN all asymptotically approach CRB. From Fig. \ref{SNAP1} (b), the recovery probability increases as the number of snapshots increases. It is noted that when $T$ is larger than $4$, the recovery probability of MNOMP is larger than that of RAM.

{When the signal is close to each other, we can basically sum up the same conclusion from Fig. \ref{SNAP2} compared to that from Fig. \ref{SNAP1}. The difference is that NBPDN arrives at CRB slower and there exists a small performance gap between all algorithms and CRB. }

\begin{figure}[htbp]\centering \subfigure[NMSE vs T.]{\begin{minipage}[t]{0.45\linewidth}\centering\includegraphics[width=2.5in]{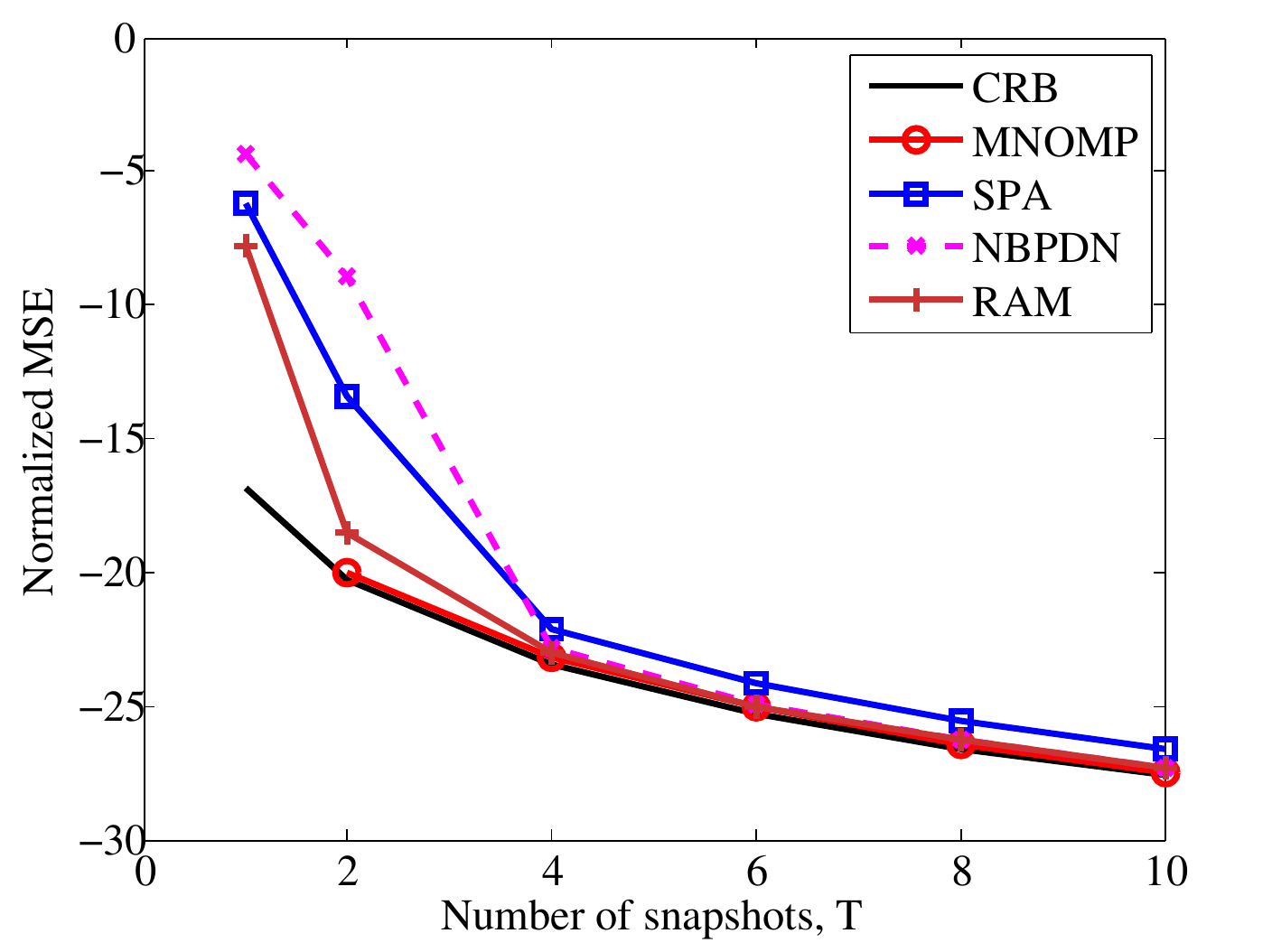}
\end{minipage}}
\subfigure[Recovery probability vs T.]{\begin{minipage}[t]{0.45\linewidth}\centering\includegraphics[width=2.5in]{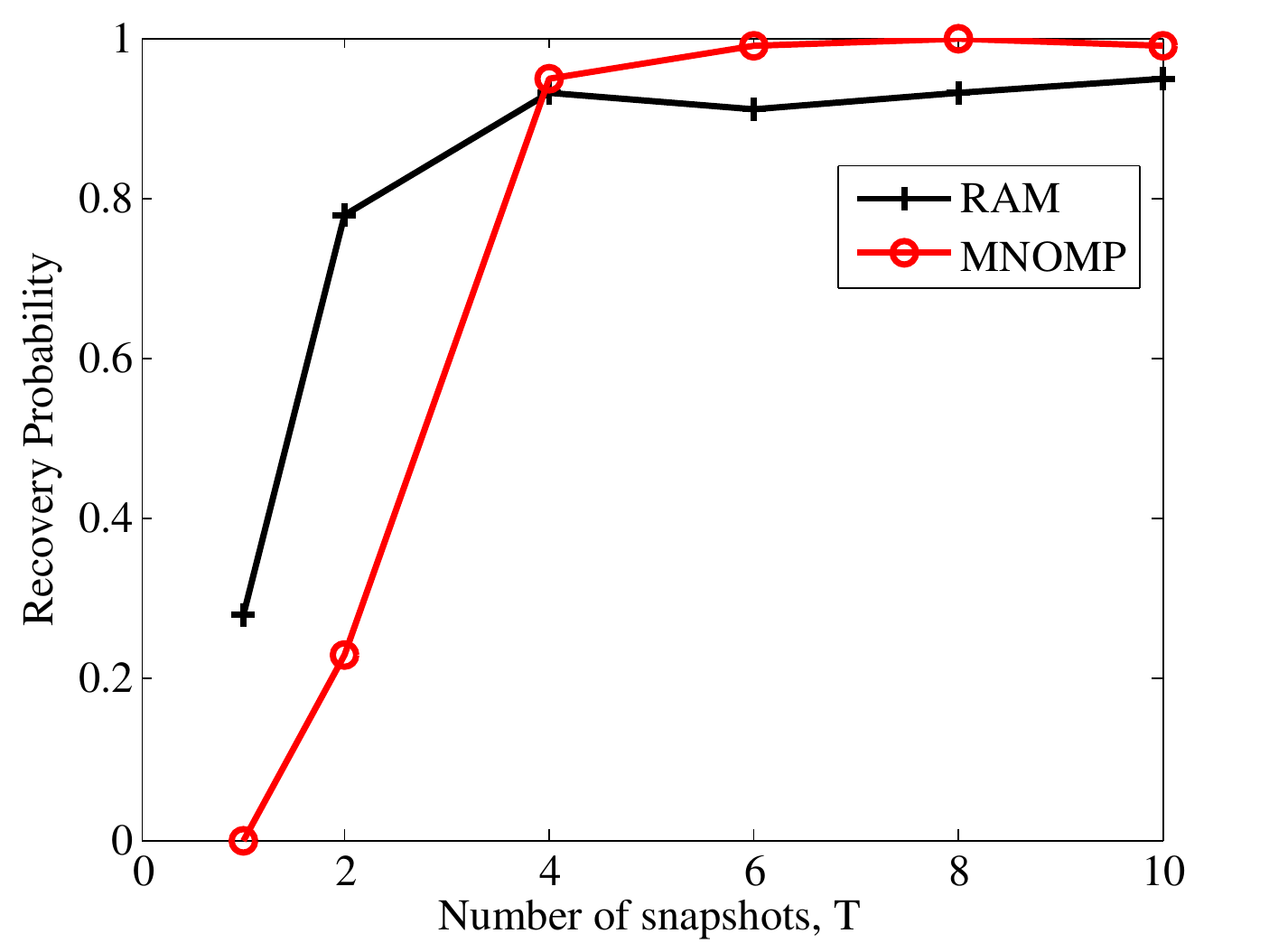}
\end{minipage}}              
\centering\caption{Normalized frequency MSE vs number of snapshots  for scenario 1.}\label{SNAP1}
\end{figure}

\begin{figure}[htbp]\centering \subfigure[NMSE vs T.]{\begin{minipage}[t]{0.45\linewidth}\centering\includegraphics[width=2.5in]{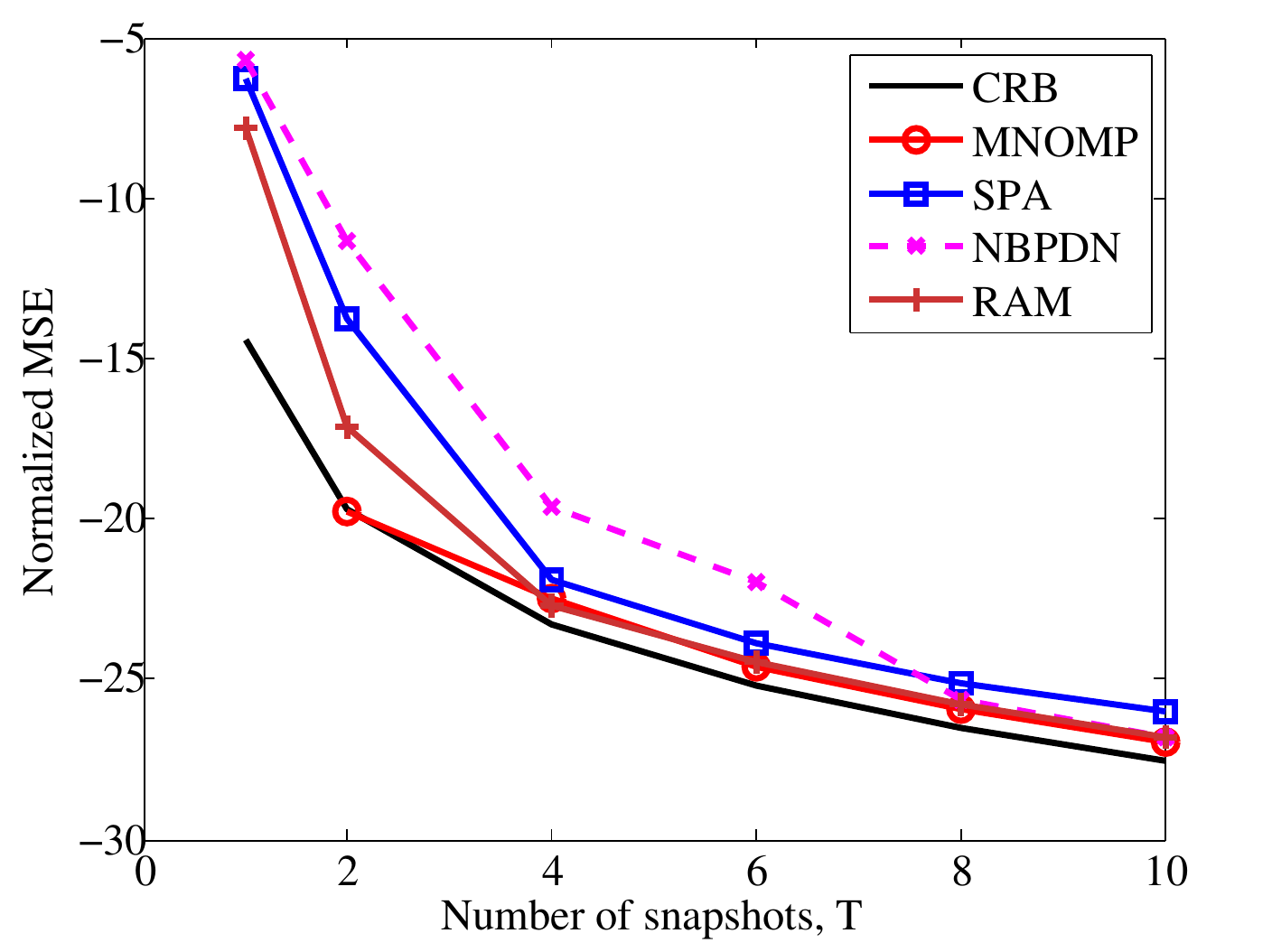}
\end{minipage}}
\subfigure[Recovery probability vs T.]{\begin{minipage}[t]{0.45\linewidth}\centering\includegraphics[width=2.5in]{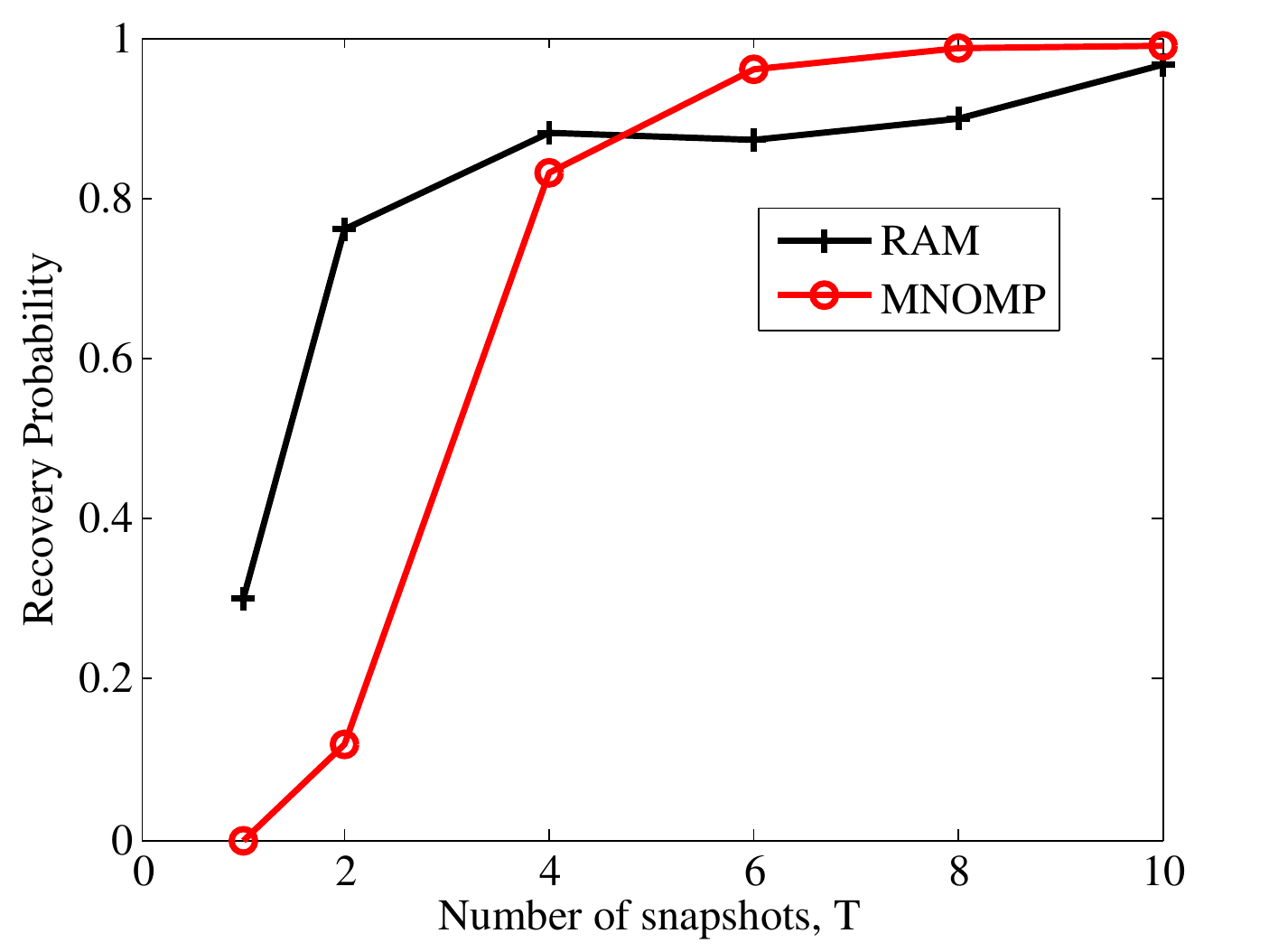}
\end{minipage}}              
\centering\caption{Normalized frequency MSE vs number of snapshots  for scenario 2.}\label{SNAP2}
\end{figure}

\subsection{Application: DOA Estimation}
{In this section, we investigate the estimation performance of MNOMP algorithm in the DOA scenario, where the inter-element spacing $d$ is half of the wavelength $\lambda$, i.e., $d = \lambda/2$. We consider $K=3$ narrow band far-field DOA angles ${\boldsymbol \phi}=[-2^{\circ}, 5^{\circ}, 12^{\circ}]^{\rm T}$. We set $N = 40$, $T = 20$. For small sample scenario, the EPUMA outperforms many other subspace based DOA estimators and offers reliable performance with small number of samples \cite{Qian}. Thus, we compare MNOMP with EPUMA. In this part, the root MSE (RMSE) $ \sqrt{\sum_{k = 1}^K (\hat{\phi}_k - \phi_k)^2}$ is used to characterize the performance of MNOMP and EPUMA, where $\hat{\phi}$ denotes the output of the algorithm. The results are presented in Fig. \ref{RMSE} (a). It can be seen that MNOMP performs better than that of EPUMA. All these algorithms approach CRB as SNR increases. From Fig. \ref{RMSE} (b), we can conclude that as the SNR increases, the recovery probability approaches one as SNR increases.}
\begin{figure}[htbp]\centering \subfigure[RMSE vs SNR.]{\begin{minipage}[t]{0.45\linewidth}\centering\includegraphics[width=2.7in]{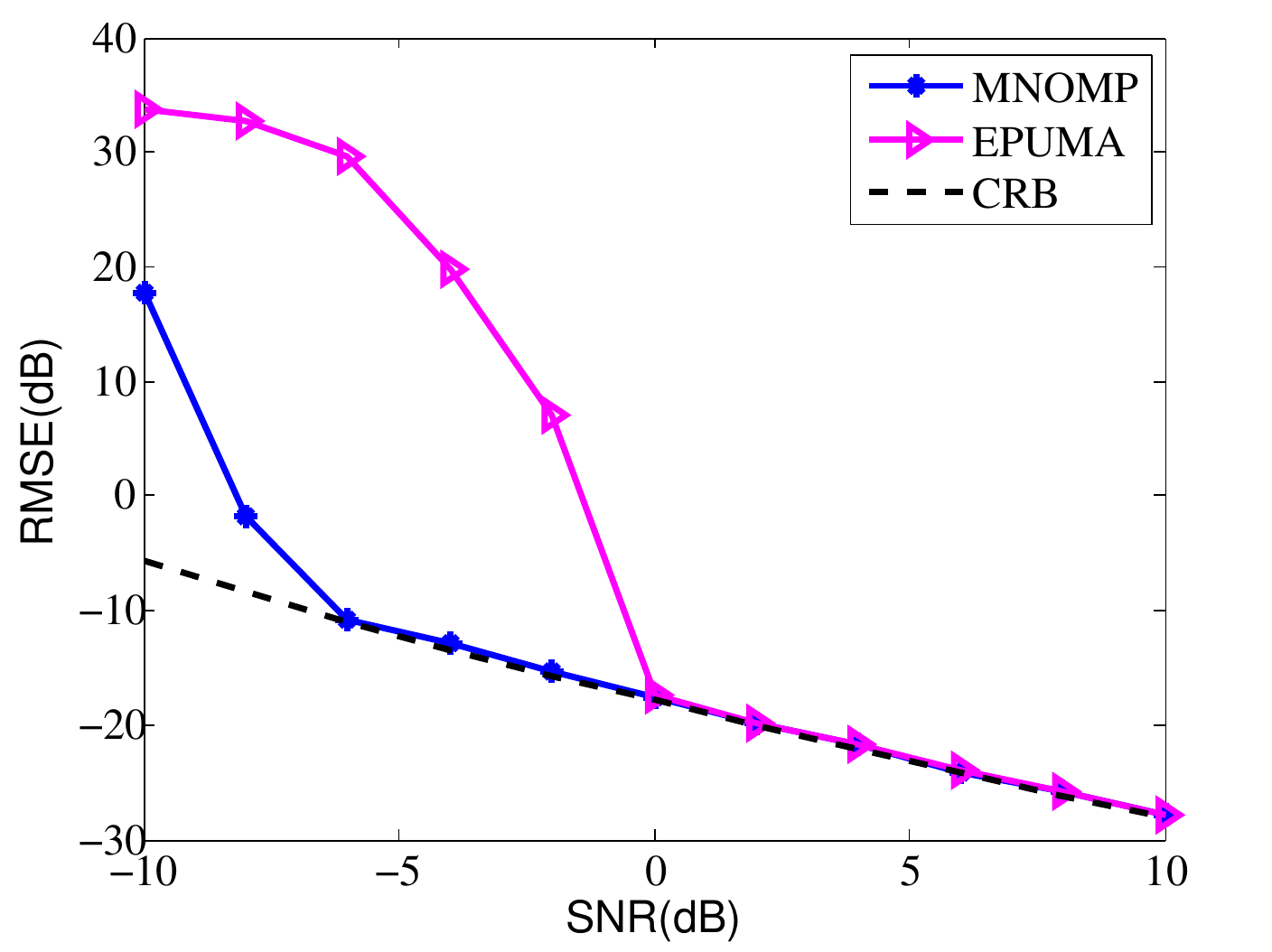}
\end{minipage}}
\subfigure[Recovery probability of MNOMP.]{\begin{minipage}[t]{0.45\linewidth}\centering\includegraphics[width=2.4in]{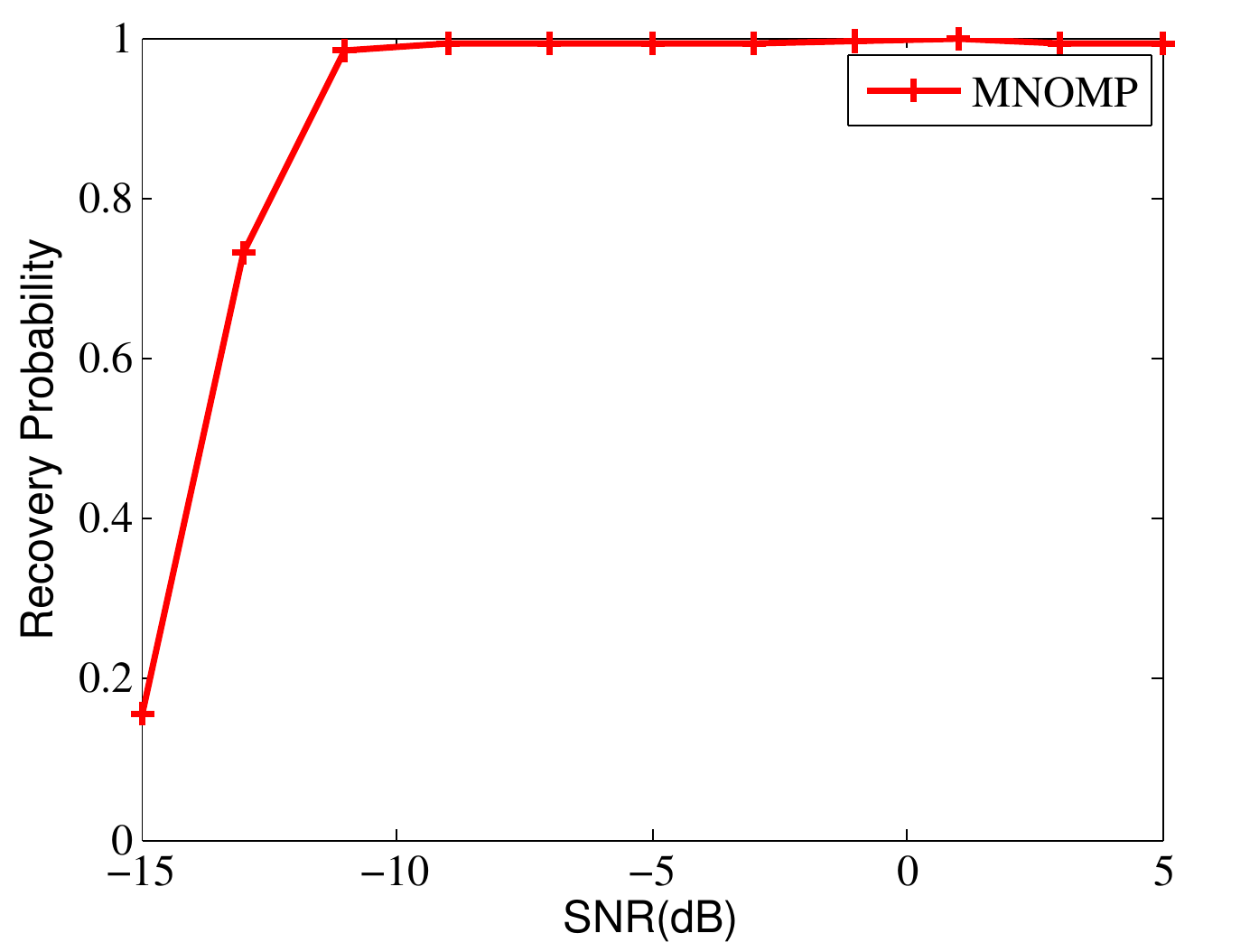}
\end{minipage}}              
\centering\caption{RMSE of MNOMP algorithm for DOA estimation.}\label{RMSE}
\end{figure}
\subsection{Success rate of all algorithms for multi-snapshot frequency estimation}
For the MMV model, the frequencies $\{\omega_k\}$ and amplitudes $\{x_{kt}\}$ consist of $(2T + 1)K$ real unknowns, and the minimum number of complex-valued samples is $\frac{1}{2}(2T+1)K$. Institutively, to recover these unknowns, the sample size per snapshot for any method must satisfy \cite{Atom3}
\begin{align}\label{lower}
N\geq \frac{1}{2T}(2T+1)K=K\left(1+\frac{1}{2T}\right).
\end{align}
Thus it is interesting to investigate the benefits of $N$ and $T$ for the various algorithms. In our simulations, the number of frequencies is $K = 10$, and the frequencies $\{\theta_k\}_{k=1}^K$ are randomly generated and satisfy $\Delta\omega_{\rm {min}}/\Delta\omega_{\rm {DFT}} = 1.2$. The frequencies are said to be successfully estimated if the model order is successfully estimated and the root MSE \cite{Atom3}, computed as $\sqrt{{\sum_{k = 1}^K|\theta_k - \hat{\theta}_k|^2}/{K}}
$ is less than $10^{-3}$, where $\hat{\theta}_k$ denotes the estimate of $\theta_k$. Results are presented in Fig. \ref{success}. It can be seen that the required sample size per snapshot for exact frequency estimation decreases as the number of snapshots increases. From Fig. \ref{success2}, we can see that RAM has the best recovery performance. MNOMP and SPA have the similar success rate performance. While the recovery performance of AST-SD is the worst.

\begin{figure}
 \centering
  \includegraphics[width=80mm]{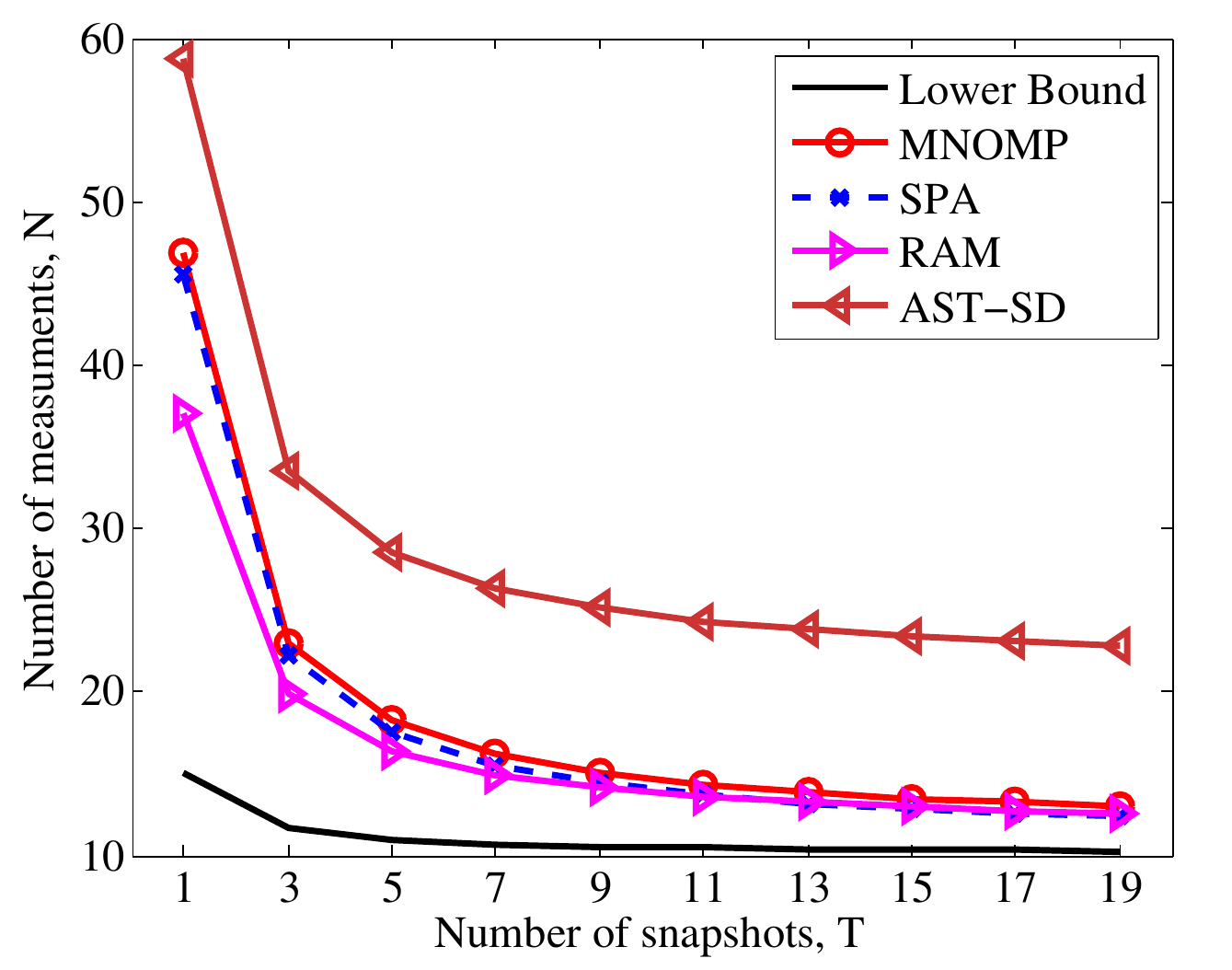}\\
  \caption{Success rates curves of all algorithms for line spectrum estimation with MMVs.}\label{success}
\end{figure}

\begin{figure}
 \centering
  \includegraphics[width=80mm]{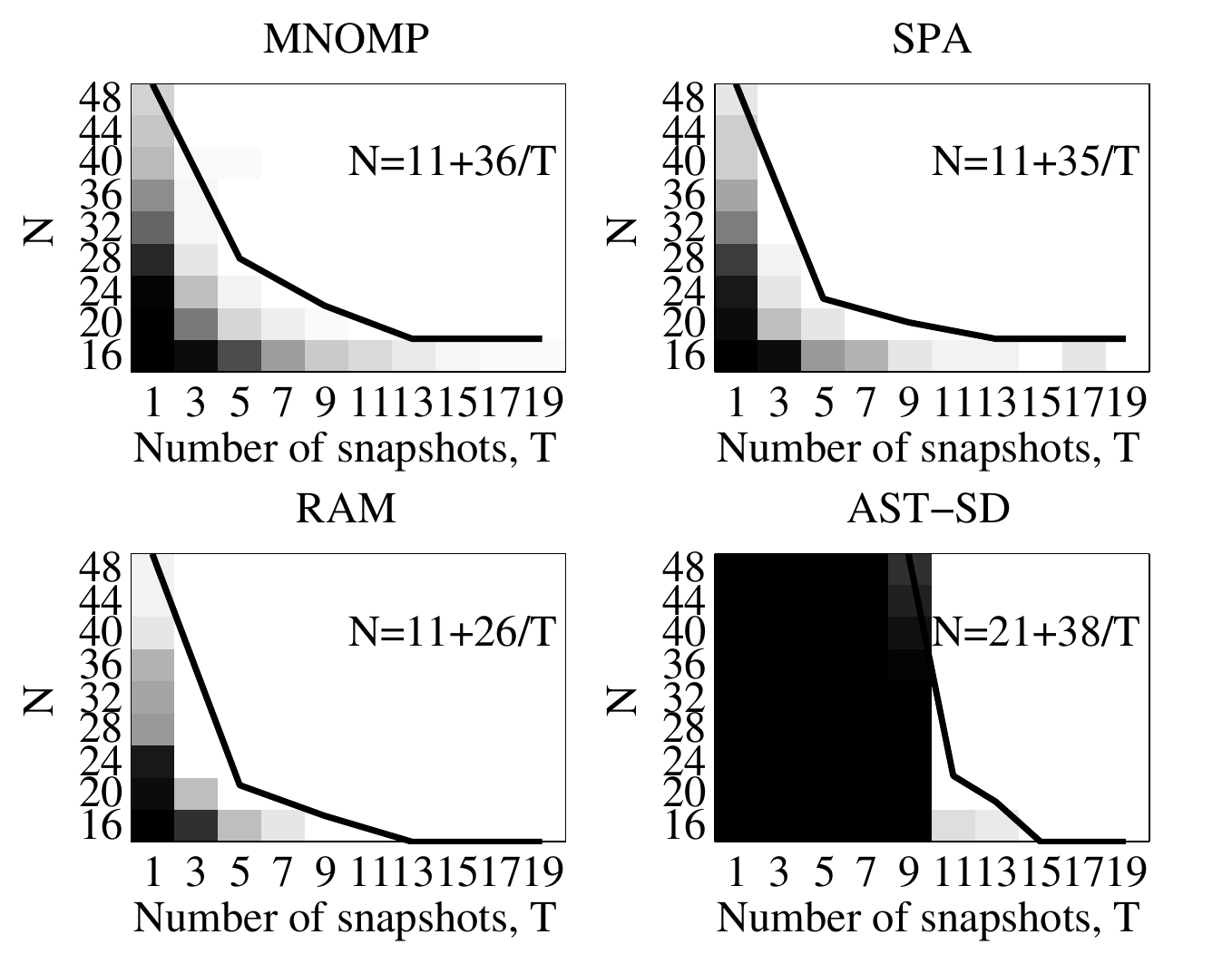}\\
  \caption{Success rates of all algorithms for line spectrum estimation with MMVs. White means complete success and black means complete failure.}\label{success2}
\end{figure}

\section{Conclusion}
This paper develops the MNOMP algorithm to deal with MMVs and show the benefit from MMVs numerically. The algorithm uses the information of already detected frequencies to refine the current frequency and has a stopping criterion based on a given overestimating probability. The convergence results are provided. It is numerically shown that MNOMP is competitive in terms of frequency estimation accuracy and asymptotically approaches CRB. Since the number of snapshots is very large in array signal processing, referring to the dimension reduction method \cite{Yangzaibook} to implement MNOMP is worth studying and will be left for future work.
\section{APPENDIX: Estimation theoretic bounds}
The complex CRB can be calculated by following the procedure similar to \cite{han} and utilize \cite[Example 2]{Nagesha11}. Here we provide an alternative approach to compute CRB. By defining $x_{kt} = g_{kt}e^{j\phi_{kt}}, \forall k = 1, \cdots, K, t = 1, \cdots, T$, We obtain matrices $\mathbf G$ and $\boldsymbol \Phi$. Let $\boldsymbol \kappa$ be ${\boldsymbol \kappa}=\left[\boldsymbol \omega^{\rm T}, {\mathbf g}^{\rm T}, \boldsymbol{\phi}^{\rm T} \right]^{\rm T}$, where ${\mathbf g}={\rm vec}({\mathbf G})$ and $\boldsymbol{\phi}={\rm vec}({\boldsymbol \Phi})$. Then the FIM is calculated according to \cite{Chi11}
\begin{align}\label{unqFIM}
{\mathbf I}({\boldsymbol \kappa})=\frac{2}{\sigma^2}\sum\limits_{n=1}^N \sum\limits_{t=1}^T  \left(\frac{\partial \Re\{Z_{nt}\}}{\partial {\boldsymbol \kappa}}\left(\frac{\partial \Re\{Z_{nt}\}}{\partial {\boldsymbol \kappa}}\right)^{\rm T}+\frac{\partial \Im\{Z_{nt}\}}{\partial {\boldsymbol \kappa}}\left(\frac{\partial \Im\{Z_{nt}\}   }{\partial {\boldsymbol \kappa}}\right)^{\rm T}\right).
\end{align}
By defining ${\mathbf g}_{t}=[g_{1t},\cdots,g_{Kt}]^{\rm T}$ and $\phi_{t}=[\phi_{1t},\cdots,\phi_{Kt}]^{\rm T}$, we have
\begin{align}\label{vecgrad}
\frac{\partial \Re\{Z_{nt}\}}{\partial {\boldsymbol \kappa}}=\left[\begin{array}{c}
                                                                         \frac{\partial \Re\{Z_{nt}\}}{\partial {\boldsymbol \theta}} \\
                                                                         {\mathbf 0}_{(t-1)K} \\
                                                                         \frac{\partial \Re\{Z_{nt}\}}{\partial {\mathbf g}_{t}} \\
                                                                         {\mathbf 0}_{(T-t)K} \\
                                                                         {\mathbf 0}_{(t-1)K} \\
                                                                         \frac{\partial \Re\{Z_{nt}\}}{\partial {\boldsymbol \phi}_{t}} \\
                                                                         {\mathbf 0}_{(T-t)K} \\
                                                                       \end{array}\right],~\frac{\partial \Im\{Z_{nt}\}}{\partial {\boldsymbol \kappa}}=\left[\begin{array}{c}
                                                                         \frac{\partial \Im\{Z_{nt}\}}{\partial {\boldsymbol \theta}} \\
                                                                         {\mathbf 0}_{(t-1)K} \\
                                                                         \frac{\partial \Im\{Z_{nt}\}}{\partial {\mathbf g}_{t}} \\
                                                                         {\mathbf 0}_{(T-t)K} \\
                                                                         {\mathbf 0}_{(t-1)K} \\
                                                                         \frac{\partial \Im\{Z_{nt}\}}{\partial {\boldsymbol \phi}_{t}} \\
                                                                         {\mathbf 0}_{(T-t)K} \\
                                                                       \end{array}\right],
\end{align}
where
\begin{subequations}
\begin{align}
&\frac{\partial \Re\{Z_{nt}\}}{\partial {\theta_k}} = -(n-1)g_{kt}{\rm {sin}}\left[(n-1)\theta_k + \phi_{kt} \right],\notag\\
&\frac{\partial \Re\{Z_{nt}\}}{\partial {g_{kt}}} = {\rm {cos}}\left[(n-1)\theta_k + \phi_{kt} \right],\notag\\
&\frac{\partial \Re\{Z_{nt}\}}{\partial {\phi_{kt}}} = -g_{kt}{\rm {sin}}\left[(n-1)\theta_k + \phi_{kt} \right],\notag\\
&\frac{\partial \Im\{Z_{nt}\}}{\partial {\theta_k}} = (n-1)g_{kt}{\rm {cos}}\left[(n-1)\theta_k + \phi_{kt} \right],\notag\\
&\frac{\partial \Im\{Z_{nt}\}}{\partial {g_{kt}}} = {\rm {sin}}\left[(n-1)\theta_k + \phi_{kt} \right],\notag\\
&\frac{\partial \Im\{Z_{nt}\}}{\partial {\phi_{k,t}}} = g_{kt}{\rm {cos}}\left[(n-1)\theta_k + \phi_{kt} \right].\notag
\end{align}
\end{subequations}
Substituting (\ref{vecgrad}) in (\ref{unqFIM}), the FIM ${\mathbf I}({\boldsymbol \kappa})$ is obtained. The CRB is ${\rm CRB}({\boldsymbol \kappa})={\mathbf I}^{-1}({\boldsymbol \kappa})$ and CRB of frequencies are $[{\rm CRB}({\boldsymbol \kappa})]_{1:K,1:K}$, which will be used as the performance metrics.
\section{Acknowledgement}
This work is supported by Zhejiang Provincial Natural Science Foundation of China under Grant LQ18F010001.

\end{document}